\newcommand\dirac{\slash\!\!\!\partial}
\newcommand{\bea}{\begin{eqnarray}}
\newcommand{\eea}{\end{eqnarray}}
\newcommand{\beq}{\begin{equation}}
\newcommand{\eeq}{\end{equation}}
\newcommand\A{{\bf A}}
\newcommand\B{{\bf B}}
\newtheorem{theorem}{Theorem}[section]
\newtheorem{proposition}[theorem]{Proposition}
\theoremstyle{remark}
\newtheorem{example}[theorem]{Example}
\numberwithin{equation}{section}
\numberwithin{theorem}{section}
\newcommand\cE{\mathcal{E}}
\newcommand\cF{\mathcal{F}}
\newcommand\cK{\mathcal{K}}
\newcommand\cM{\mathcal{M}}
\newcommand\cL{\mathcal{L}}
\newcommand\cP{\mathcal{P}}
\newcommand\cS{\mathcal{S}}
\newcommand\cV{\mathcal{V}}
\newcommand{\Z}{\ensuremath{\mathbb Z}}
\newcommand{\R}{\ensuremath{\mathbb R}}
\newcommand{\CC}{{\mathbb C}}
\newcommand{\QQ}{{\mathbb Q}}
\newcommand{\RR}{{\mathbb R}}
\newcommand{\TT}{{\mathbb T}}
\newcommand{\ZZ}{{\mathbb Z}}
\begin{document}

\title[T-duality simplifies bulk-boundary correspondence]
{T-duality simplifies bulk-boundary correspondence}

\author[V Mathai]{Varghese Mathai}

\address[Varghese Mathai]{
Department of Pure Mathematics,
School of  Mathematical Sciences, 
University of Adelaide, 
Adelaide, SA 5005, 
Australia}

\email{mathai.varghese@adelaide.edu.au}

\author[G.C.Thiang]{Guo Chuan Thiang}

\address[Guo Chuan Thiang]{
Department of Pure Mathematics,
School of  Mathematical Sciences, 
University of Adelaide, 
Adelaide, SA 5005, 
Australia}

\email{guo.thiang@adelaide.edu.au}

\begin{abstract}
Recently we introduced T-duality in the study of topological insulators.
In this paper, we study the bulk-boundary correspondence for three phenomena in condensed matter physics, 
namely, the quantum Hall effect, the Chern insulator, and time reversal invariant topological insulators. 
In all of these cases, we show that T-duality trivializes the bulk-boundary correspondence.
\end{abstract}

\maketitle


\section{Introduction}
\label{sec:intro}
The most interesting and potentially useful feature of a topological insulator is the presence of gapless modes on its boundary, despite it being an insulating material in the bulk. The bulk-boundary correspondence describes a correlation between certain topological features of the bulk description and those of boundary phenomena. Generically, gapless boundary modes are expected to arise at the interface of two samples whose bulk Hamiltonians are characterized by different topological invariants, reflecting the topological obstruction in continuously deforming one Hamiltonian into the other. A stronger version of the correspondence states that there is a homomorphism mapping a bulk topological invariant to a boundary topological invariant. Such a statement not only implies that the gapless boundary modes are topologically protected, but also that bulk and boundary quantities are  \emph{simultaneously} quantized.

In the context of the 2D integer quantum Hall effect \cite{Bellissard}, there is a correspondence between the quantized Hall conductivity and chiral edge currents. This phenomenon was explained in \cite{Kellendonk1,Kellendonk2,Kellendonk3} using the language of $K$-theory and Connes' noncommutative geometry \cite{Connes94}, as well as in \cite{Hatsugai,Elbau,Avila,Kotani,Bourne} using mathematical techniques of various sophistication and generality. With the recent experimental discovery of Chern insulators \cite{CZ,JM} and topological insulators protected by time-reversal symmetry \cite{KW,Hsieh}, it is desirable to have a mathematical framework in which a general bulk-boundary correspondence principle can be stated. 

In this paper, we formulate the bulk-boundary correspondence as a topological boundary map associated to an extension of a bulk algebra of observables by a boundary algebra, in the sense of \cite{Kellendonk1,Kellendonk4}. More precisely, we assume that the bulk $C^*$-algebra $\mathscr{C}$ arises as a crossed product of the codimension-1 boundary algebra $\mathscr{J}$ by $\ZZ$, and relate them through the associated Toeplitz extension. The Fermi projection for a bulk Hamiltonian defines an element of $K_0(\mathscr{C})$, which has a homomorphic image in $K_1(\mathscr{J})$ under the boundary map in $K$-theory. We remark that the role of $K$-theory in the systematic study of topological phases with symmetry has already been discussed in \cite{AK,FM,T,T2,MT}.

In \cite{MT}, we introduced T-duality as a tool to study topological insulators. As an example, we explained how the Chern number and the rank of valence bundles for 2D Chern insulators are interchanged under the T-duality transformation. We will proceed to show that the bulk-boundary homomorphism becomes trivialized under T-duality. This turns out to be a fairly generic phenomenon --- a relatively complicated bulk-boundary map becomes a trivial map when viewed on the T-dual side. We develop noncommutative, twisted, and real versions of T-duality, and demonstrate the principle for the quantum Hall effect, the Chern insulator, and the time-reversal invariant topological insulator. In the latter, Real $KR$-theory or Quaternionic $KQ$-theory groups are required, but T-duality simplifies the mathematical description by translating them to ordinary real $KO$-theory groups. The Fu--Kane--Mele (FKM) $\ZZ_2$-invariant \cite{KM2,FK2,FKM} discovered in the physics literature (which as explained in \cite{deNittis} is really the FKMM invariant \cite{FKMM} introduced some years earlier) can then be understood as the T-dual to the classical Stiefel--Whitney classes. These results illustrate the potential utility of T-duality in the field of topological insulators. \bigskip

\begin{table}[h]
\begin{tabular}{|l|l|l|}
  \hline
  Topological invariant & T-dual & Group\\
  \hline
  \hline
  \multicolumn{3}{|c|}{$d=2$, No $\mathsf{T}$}\\
  \hline
  $c_1$ & rank & $\ZZ$\\ 
  rank & $c_1$ & $\ZZ$\\
  \hline
  \multicolumn{3}{|c|}{$d=2$, $\mathsf{T}^2=-1$}\\
  \hline
  FKM & $w_2$ & $\ZZ_2$\\
  $Q$-rank & $p_1$ & $\ZZ$ \\
  \hline

\hline
  \multicolumn{3}{|c|}{$d=3$, $\mathsf{T}^2=-1$}\\
  \hline
  $\vartheta$ (strong FKM) & $w_1$ & $\ZZ_2$\\
  weak FKM & $w_2$ & $3\ZZ_2$\\
  $Q$-rank & $p_1$ & $\ZZ$ \\
\hline
\hline
  \multicolumn{3}{|c|}{Bulk-boundary homomorphism}\\
  \hline
  $\partial$ & $\iota^*$ & {N/A}\\
\hline
\end{tabular}\bigskip
\caption{A dictionary for translating between topological invariants for topological insulators and their T-duals.}
\label{table:dictionary}
\end{table}

\tableofcontents

\section{The relevant Hamiltonians }
\subsection{The quantum Hall effect}\label{section:IQHE}
We begin by reviewing the construction of the Hamiltonian in the integer quantum Hall effect (IQHE).
Consider Euclidean space $\RR^2$ equipped with its usual
metric $(dx^2+dy^2)$, and symplectic area form
$\omega  = dx\wedge dy$.
The Euclidean group $G=\RR^2 \rtimes SO(2)$ acts transitively on $\RR^2$ by affine transformations
The torus $\TT^2$ can be realised as the
quotient of
$\RR^2$ by the action of its fundamental group $\ZZ^2$.

Let us now pick an  electromagnetic vector potential, that is, a 
1-form $\A$ such that $d\A = \B= \theta\omega$ is the magnetic field in some suitable units, for some
fixed $\theta \in (0,1)$. 
As in geometric quantization we may regard $\A$ as defining a connection
$\nabla = d+i\A$ on a line bundle $\cL$ over $\RR^2$, whose curvature is
$i\B=i\theta\omega$.
Using the Riemannian metric  the Hamiltonian of an
electron in this field is given in suitable units by
$$H = H_{\A} = \frac 12\nabla^*\nabla = \frac 12(d+i\A)^*(d+i\A).$$
The underlying $G$-invariance of the theory is clear.
In a real material this Hamiltonian would be modified by the addition of a real-valued
potential $V$.
By taking $V$ to be invariant under $\ZZ^2$, that is periodic, this perturbation is
given a crystalline type structure.
The spectrum of the unperturbed Hamiltonian $H_\A$
for $\A = -\theta ydx$ has been computed by Landau in the 1930s, cf. \cite{Avron}.
It has only discrete eigenvalues with infinite multiplicity. 
Any $\A$ is cohomologous to $-\theta ydx$ since they both have
$\theta\omega$ as differential, and forms differing by an exact form $d\phi$
 give
equivalent models: in fact, multiplying the wave functions by
$\exp(i\phi)$ shows that the models for $\A$ and $-\theta ydx$ are
unitarily equivalent.
This equivalence also intertwines the $\ZZ^2$-actions so that the spectral
densities for the two models also coincide.
However, the perturbed Hamiltonian $H_{\A,V} = H_\A +V$, which
is the key to the quantum Hall effect,  has unknown spectrum
for general $\ZZ^2$-invariant $V$. For $\gamma = (m, n) \in \ZZ^2$,
let $\psi_\gamma(x, y) = n\theta x$ be a function on $\RR^2$. 
It satisfies $\gamma^*\A=\A = d \psi_\gamma$. Also 
$\psi_\gamma(0, 0)=0$ for all $\gamma  \in \ZZ^2$ and 
$\psi_\gamma(\gamma') = \theta m'n$ where $\gamma' = (m', n') \in \ZZ^2$.
Define a projective unitary action $T^\sigma$ of $\ZZ^2$ on $L^2(\RR^2)$ as follows:
\begin{align*}
U_\gamma(f)(x, y) & = f(x-m, y-n)\\
S_\gamma(f)(x,y) & = \exp(-2\pi i \psi_\gamma(x,y)) f(x,y)\\
T_\gamma^\sigma &=U_\gamma\circ S_\gamma.
\end{align*}
Then the operators $T_\gamma^\sigma$, also known as {\em magnetic translations}, satisfy $T^\sigma_e={\rm{Id}}, \,\, T^\sigma_{\gamma_1}
T^\sigma_{\gamma_2} = \sigma(\gamma_1, \gamma_2)T^\sigma_{\gamma_1\gamma_2}$,
where $\sigma(\gamma, \gamma') = \exp(-2\pi i\theta m'n)$, which is a {\em multiplier} (or {\em 2-cocycle})
on $\ZZ^2$, that is, it satisfies,
\begin{enumerate}
\item $\sigma(\gamma, e)=\sigma(e, \gamma)=1$ for all $\gamma \in \ZZ^2$;
\item $\sigma(\gamma_1, \gamma_2)\sigma(\gamma_1\gamma_2, \gamma_3)=\sigma(\gamma_1, \gamma_2\gamma_3)\sigma(\gamma_2, \gamma_3)$
for all $\gamma_j \in \ZZ^2,\, j=1,2,3.$
\end{enumerate}
An easy calculation shows that $T^\sigma_\gamma \nabla = \nabla T^\sigma_\gamma$ and taking adjoints, 
$T^\sigma_\gamma \nabla^* = \nabla^* T^\sigma_\gamma$. Therefore $T^\sigma_\gamma H_\A = H_\A T^\sigma_\gamma$.
Also, since $V$ is periodic, $T^\sigma_\gamma V = V T^\sigma_\gamma$. We conclude that for all $\gamma\in \ZZ^2$,
$T^\sigma_\gamma H_{\A,V} = H_{\A, V} T^\sigma_\gamma$, that is, the Hamiltonian commutes with magnetic translations.
The commutant of the projective action $T^\sigma$ is the projective action $T^{\bar\sigma}$. If $\lambda$ lies in a spectral gap
of $H_{\A, V} $, then the Riesz projection is $p_\lambda(H_{\A, V})$ where $p_\lambda$ is a  smooth compactly supported
function which is identically equal to 1 in the interval $[{\rm inf} H_{\A, V}, \lambda]$, 
and whose support is contained in the interval $[-\varepsilon +{\rm inf} H_{\A, V}, \lambda+ \varepsilon]$ for some $\varepsilon>0$ small enough.
Then  $$p_\lambda(H_{\A, V})\in C^*(\ZZ^2, \bar\sigma)\otimes\cK(L^2(\cF)),$$ where $\cF$ is a fundamental
domain for the action of $\Z^2$ on $\R^2$, and $p_\lambda(H_{\A, V})$ defines an element in $K_0( C^*(\ZZ^2, \bar\sigma))$. By the {\em gap hypothesis}, the Fermi level of  $H_{\A, V}$ lies in a
spectral gap.
We also study the case when 
$V$ is aperiodic with hull a Cantor set $\Sigma$. In this case, the Riesz projection to a spectral gap, or the {\em Fermi projection}, defines an element 
in $K_0( C(\Sigma) \rtimes_{\bar\sigma}\ZZ^2))$.

\subsection{The Chern insulator}\label{section:cherninsulator}
In \cite{Hal}, Haldane introduced an example of what is now called a \emph{Chern insulator} where instead of magnetic translational symmetry, the Hamiltonian has ordinary $\ZZ^2$ translation invariance, and ordinary Bloch--Floquet theory \cite{RS} suffices for its analysis. Such translation-invariant Hamiltonians lead to the study of their associated vector bundles (Bloch bundles) over the Brillouin 2-torus $\TT^2$, and for insulators the topological invariants of the sub-bundle of eigenstates (comprising the valence bands) lying below the Fermi level is of particular interest. For the abstract arguments that follow, the Hamiltonians are formulated directly on quasi-momentum space as a continuous family of Bloch Hamiltonians over $\TT^2$, rather than on real space.

The basic model of a Chern insulator is a two-band Hamiltonian in 2D, which up to an overall shift in the energy level has the generic form \cite{SPF}
\begin{equation*}
    H(k)= \mathbf{h}(k)\cdot\vec{\sigma}\equiv \sum_{i=1}^3 h_i(k)\sigma_i, \;\;k\in\TT^2,
\end{equation*}
where $\sigma_i$ are the Pauli matrices and $\mathbf{h}=(h_1,h_2,h_3)$ is a smooth map from the Brillouin torus $\TT^2$ to $\mathbb{R}^3$. Since $H(k)^2=|\mathbf{h}(k)|^2$, the gapped condition for an insulator is ensured if $\mathbf{h}$ is nowhere zero. Assuming this, we can form the ``spectrally-flattened'' Hamiltonian $\widehat{H}(k)=\widehat{\mathbf{h}}(k)\cdot\vec{\sigma}$ where $\widehat{\mathbf{h}}=\mathbf{h}/|\mathbf{h}|$, and the valence band is given by the projection onto its negative eigenbundle $P(k)=\frac{1}{2}(1-\widehat{H}(k))$. In particular, the valence band is a line bundle $\mathcal{E}$ and has the first Chern number $c(\mathcal{E})$ as a topological invariant:
\begin{equation*}
c(\mathcal{E})=\frac{1}{4\pi}\int_{\TT^2}\mathrm{d}k\,\widehat{\mathbf{h}}(k)\cdot\frac{\partial\widehat{\mathbf{h}}(k)}{\partial{k_1}}\wedge\frac{\partial\widehat{\mathbf{h}}(k)}{\partial{k_2}}.
\end{equation*}

In general, vector bundles $\mathcal{E}$ over $\TT^2$ can be characterized by their Chern character $\mathrm{Ch}(\mathcal{E})=\mathrm{rank}(\mathcal{E})+c_1(\mathcal{E})$, where $c_1(\mathcal{E})$ is the first Chern class. Then the (first) Chern number $c(\mathcal{E})$ is obtained from pairing $c_1(\mathcal{E})$ with the fundamental class of $\TT^2$. The Chern number is determined by the determinant line bundle, so to construct model Hamiltonians with valence bundles having arbitrary rank and Chern number, it suffices to consider line bundles in the above two-band models with the required Chern number, then augment by some trivial bundles to make up the rank. Such line bundles can be obtained by a judicious choice of the function $\mathbf{h}$ \cite{SPF}. For instance, a family of models which realize valence bands with Chern numbers $0,\pm 1$ or $\pm 2$ is
\begin{equation}
\mathbf{h}(k)=(\cos k_1, \cos k_2, m+a\cos (k_1+k_2)+b(\sin k_1+\
\sin k_2)),\label{chernhamiltonian}
\end{equation}
where $m,a,b$ are real parameters. The Chern number of the valence band $\mathcal{E}$ can be computed to be \cite{SPF}
\begin{equation*}
    c(\mathcal{E})=\mathrm{sgn}(-m-a)+\frac{1}{2}[\mathrm{sgn}(m-a+2b)+\mathrm{sgn}(m-a-2b)].
\end{equation*}

\subsection{The time-reversal invariant topological insulator}
The defining feature of a time-reversal invariant (or $\mathsf{T}$-invariant) Hamiltonian formulated on the Brillouin torus $\TT^d$ is the presence of an antiunitary anti-involutary time-reversal operation $\mathsf{T}$ on the Bloch bundle, satisfying the following commutation relation for the Bloch Hamiltonians
\beq
    H(k)=\mathsf{T}H(-k)\mathsf{T}^{-1}, \qquad k\in\TT^d.\nonumber
\eeq
In other words, $\mathsf{T}$ furnishes the Bloch bundle with a Quaternionic structure which the Bloch Hamiltonians must be compatible with. The extra involution $\varsigma:k\mapsto -k$ arises due to the Fourier transform of ordinary complex conjugation on real space. The requirement that $\mathsf{T}^2=-1$ means that each Bloch eigenstate $v$ has a Kramers partner $\mathsf{T}v$; in particular, there is a degeneracy at the fixed points of $\varsigma$.

A standard way to construct $\mathsf{T}$-invariant Bloch Hamiltonians is to begin with an unconstrained one, such as the $H(k)$ from the Chern insulator example, and take $\widetilde{H}(k)=H(k)\oplus \overline{H}(-k)$ where $\overline{H}(k)$ denotes the complex conjugate of $H(k)$. Then we can take 
\beq
\mathsf{T}=\begin{pmatrix} 0 & 1 \\ -1 & 0 \end{pmatrix}\mathsf{K},\label{TIhamiltonian}
\eeq
where $\mathsf{K}$ is complex conjugation, and verify that $\mathsf{T}\widetilde{H}(-k)\mathsf{T}^{-1}=\widetilde{H}(k)$.

In $d=2$, an explicit model originating from a tight-binding Hamiltonian on a honeycomb lattice was given by Kane--Mele \cite{KM,KM2}. Another example was constructed by Bernevig--Hughes--Zhang \cite{BHZ} starting from a square lattice Hamiltonian, and their Bloch Hamiltonians are of the form  \eqref{TIhamiltonian} with
\beq
H(k)=[\Delta +\cos k_1 + \cos k_2]\sigma_3 + A(\sin k_1\sigma_1+\sin k_2\sigma_2),\nonumber
\eeq
where $\Delta$ and $A$ are some real parameters.

The particular model used to obtain $\mathsf{T}$-invariant Hamiltonians may have some additional symmetries, such as point symmetries or conserved spin in the $3$-direction perpendicular to the plane of motion. These extra symmetries may either be genuine physical constraints, or accidental features of the model. In general, it is the totality of the symmetry constraints which determine the relevant category of bundles and therefore the relevant topological invariants to use. For example, if conservation of spin in the $3$-direction is required, one can define the so-called spin-Chern numbers \cite{Sheng,Prodan2}, complementing the various $\ZZ_2$-valued indices of Fu--Kane--Mele \cite{KM2,FK2,FKM}; see \cite{Avila} for further discussion. When spatial inversion $\mathsf{P}$ and time-reversal $\mathsf{T}$ are separately symmetries of the Hamiltonian, there is a lift of the $\ZZ_2$-invariant to an integer one \cite{FM}. When $\mathsf{P}\mathsf{T}$ is a symmetry but $\mathsf{P}$ and $\mathsf{T}$ are not separately symmetries, the $\ZZ_2$ invariant disappears \cite{MT}.

We are only concerned with the $\ZZ_2$-invariant associated with 2D $\mathsf{T}$-invariant topological insulators, and its correspondence to a boundary $\ZZ_2$-invariant, so only the constraint of $\mathsf{T}$-symmetry (along with the implicit $\ZZ^2$-translational symmetry) applies. Mathematically, this means that the valence bundles for $\mathsf{T}$-invariant topological insulators are to be thought of as Quaternionic bundles over the 2-torus $\widehat{\TT}^2$ with involution $\varsigma$, see Section \ref{section:quaternionicbundles}. Similarly, we use Quaternionic bundles over $\widehat{\TT}^3$ to model 3D $\mathsf{T}$-invariant topological insulators \cite{Hsieh}, for which there are four $\ZZ_2$-invariants \cite{FKM}.

\section{Bulk-boundary homomorphism and Toeplitz extension}
An important physical feature of the integer quantum Hall effect is the correspondence between the transverse Hall conductivity in the bulk and chiral edge currents along the boundary. There is a useful heuristic semi-classical picture for this correspondence --- the cyclotron orbits of electrons, under the application of a longitudinal electric field and a constant magnetic field perpendicular to the plane of motion, are intercepted by the boundary of a finite sample, giving rise to chiral currents along the boundary.

Rigorous work establishing this bulk-boundary correspondence for the IQHE was carried out in \cite{Kellendonk1,Kellendonk2,Kellendonk3}, for both discrete and continuous models (see also \cite{Bourne,Elbau}). The basic idea was to extend the $C^*$-algebra associated to bulk observables by one pertaining to boundary observables, through a short exact sequence, whose middle algebra describes a family of bulk-with-boundary systems. This extension turns out to be a Toeplitz-like extension for the crossed product by a complementary group of translational symmetries $\ZZ$ transverse to the boundary (a related Wiener--Hopf extension was used for continuous $\mathbb{R}$ symmetry). Its corresponding $K$-theory (cyclic) long exact sequence is the Pimsner--Voiculescu exact sequence, and the $K$-theory boundary map taking $K_0$ of the bulk algebra to $K_1$ of the boundary algebra, together with a dual boundary map in cyclic cohomology (see Section \ref{section:dualPV}), constituted the bulk-boundary correspondence. Roughly speaking, this correspondence maps the noncommutative Chern number for the Fermi projection to a noncommutative winding number. We extract the main idea behind their construction which we use to model a general notion of bulk-boundary correspondence in $K$-theory.

Suppose we have a $C^*$-algebra of observables $\mathscr{C}$ associated to a bulk physical system, and another $C^*$-algebra $\mathscr{J}$ associated to a codimension-1 boundary. We assume that there is $\ZZ$ translation invariance (or more generally covariance with action $\alpha$) in the spatial dimension transverse to the boundary, which translates to $\mathscr{C}$ being a crossed product $\mathscr{C}=\mathscr{J}\rtimes_\alpha\ZZ$. There is a natural Toeplitz extension associated to such a crossed product (\cite{Pimsner}, 10.2 of \cite{Blackadar}),
\beq
    0\longrightarrow\left(\mathscr{J}\otimes C_0(\ZZ)\right)\rtimes_{\alpha\otimes\tau}\ZZ\longrightarrow \left(\mathscr{J}\otimes C_0(\ZZ\cup\{\infty\})\right)\rtimes_{\alpha\otimes\tau}\ZZ\xrightarrow{\mathrm{ev}_\infty} \mathscr{J}\rtimes_\alpha\ZZ\longrightarrow 0,\label{toeplitzextension}
\eeq
where $\tau$ is left translation on $\ZZ$ and fixes $\infty$. Abbreviating the middle algebra by $\mathscr{E}$ (for ``extension'') and using $C_0(\ZZ)\rtimes_\tau\ZZ\cong\mathcal{K}(l^2(\ZZ))$, we can rewrite this as
\beq
    0\longrightarrow\mathscr{J}\otimes\mathcal{K}\longrightarrow \mathscr{E}\xrightarrow{\mathrm{ev}_\infty}\mathscr{C}\longrightarrow 0.\label{toeplitzextensionshort}
\eeq
In \eqref{toeplitzextension}, the $\ZZ$ in $C_0(\ZZ)$ can be interpreted as the (classical) variable labelling the position of the boundary, relative to some arbitrary choice of origin, so the $\alpha\otimes\tau$ action of $\ZZ$ encodes this covariance. We allow the position of the boundary to be ``at infinity'' in the middle algebra $\mathscr{E}$, which now has the interpretation of a continuous family of bulk-with-boundary algebras such that evaluation at infinity gives the algebra $\mathscr{C}$ for the bulk-without-boundary. Thus, the manner in which the bulk-without-boundary description is topologically linked with the bulk-with-boundary description is encoded in the short exact sequence \eqref{toeplitzextensionshort}. The boundary map $\partial:K_0(\mathscr{C})\rightarrow K_1(\mathscr{J}\otimes\mathcal{K})\cong K_1(\mathscr{J})$ is taken to be the ($K$-theoretic) bulk-boundary homomorphism in this paper.

\begin{example}
The Fermi projection $p_\lambda$ for the Hamiltonian in the IQHE defines an element of $K_0(\mathscr{C})$ where $\mathscr{C}=C^*(\ZZ^2,\bar{\sigma})$ as explained in Section \ref{section:IQHE}. In the presence of disorder, a more complicated crossed product needs to be used \cite{Bellissard}. In the detailed tight-binding model in \cite{Kellendonk1,Kellendonk2}, the bulk algebra $\mathscr{C}$ describes operators which are homogeneous in the plane, $\mathscr{J}\otimes\mathcal{K}$ describes operators which are homogeneous along the boundary but compact in the transverse direction, while $\mathscr{E}$ describes homogeneous half-plane operators with compact boundary conditions. The Hall conductivity is obtained by pairing the $K_0$ class of the Fermi projection with the conductivity cyclic 2-cocycle, following \cite{Bellissard}. The dual boundary map in cyclic cohomology takes this 2-cocycle to a 1-cocycle, whose pairing with $\partial[p_\lambda]$ equals the original pairing. Furthermore, the latter pairing was explained to be the edge current carried by the edge states for the Hamiltonian restricted to the half-plane. A continuous analogue of this work appeared in \cite{Kellendonk3}, and is summarized in \cite{Kellendonk4}.
\end{example}

\begin{example}
For a band insulator with $\ZZ^d$ translational symmetry, there is a family of Bloch Hamiltonians over the the Brillouin torus $\TT^d$, and the valence subbundle over the Brillouin torus $\TT^d$ defines a projection in $C(\TT^d)\cong C(\TT^{d-1})\rtimes_\mathrm{id}\ZZ$. The simplest example is that of the 2D Chern insulator described in Section \ref{section:cherninsulator}, which can be thought of as a commutative version of the IQHE without magnetic fields and Landau levels. Typically, $\TT^{d-1}$ is a sub-Brillouin torus coming from the remaining $\ZZ^{d-1}$ symmetry for the boundary. This need not be the case, and we can think of some of the torus dimensions as arising from angular variables for a parameter space. This is the paradigm of ``virtual'' topological insulators, which are discussed in \cite{Prodan}.

In the presence of further symmetry constraints such as that of time-reversal, the projection onto the valence bundle defines a class in a refined version of $K$-theory, such as $KR$, $KQ$, or their equivariant versions. In these cases, $\mathscr{J}$ is regarded as a real $C^*$-algebra, the crossed product and Toeplitz extension are real, and we have to keep track of the sign of the $K$-theory degree, e.g.\ $\partial:KO_0(\mathscr{C})\rightarrow KO_{-1}(\mathscr{J})\cong KO_7(\mathscr{J})$, see Section \ref{section:quaternionicbundles}.
\end{example}

It should be noted that the $K$-theory groups only provide primitive topological invariants, which should then be paired with some element of a suitable dual theory in order to extract numerical invariants with physical meaning. For example, a generic valence vector bundle for a 2D Chern insulator is characterized by its rank and first Chern class. Its pairing with the fundamental class extracts the first Chern number, and the explicit formula for this pairing reveals the connection between the first Chern number and the Hall conductivity; pairing with a 0-trace extracts the rank (or the gap-labelling group in the aperiodic case). As an alternative to cyclic cohomology, Kasparov theory can be used to relate pairings between $K$-homology and $K$-theory \cite{Bourne} in the case of the IQHE, and can potentially apply to the time-reversal invariant insulators for which real $K$-theory is required. We are mainly concerned with studying the bulk-boundary correspondence at the level of these primitive $K$-theoretic invariants.

There are several other complementary approaches to the bulk-boundary correspondence \cite{Hatsugai,Graf,Avila} which do not use topological boundary maps but analyze directly the relationship between a bulk Hamiltonian on the plane and its restriction to some subspace (e.g.\ a half-plane) of the plane with boundary conditions applied (see also \cite{Loring} for a $K$-theoretic approach). Conceptually, these two Hamiltonians describe different physical setups, whereas the correspondence established in \cite{Kellendonk1,Kellendonk2,Kellendonk3} is a stronger and manifestly topological \emph{simultaneous} quantization of bulk and boundary phenomena. The language of $C^*$-algebra extensions precisely facilitates a concurrent description of bulk-with-boundary observables and those of the bulk without boundary within a \emph{single} extension algebra $\mathscr{E}$.

\subsubsection*{Pimsner--Voiculescu boundary map}
A breakthrough in the computation of the $K$-theory of crossed products by an action of $\ZZ$ was made in \cite{Pimsner}. The authors used a Toeplitz extension \eqref{toeplitzextension} and identified the $K$-theory of the middle term with the $K$-theory of $\mathscr{J}$. Then the associated cyclic exact sequence becomes the Pimsner--Voiculescu (PV) exact sequence
\beq
    \xymatrix{ K_0(\mathscr{J}) \ar[r]^{1-\alpha_*} & K_0(\mathscr{J}) \ar[r]^{j_*\;\;\;\;} & K_0(\mathscr{J}\rtimes_\alpha \ZZ) \ar[d]^\partial & \\
  K_1(\mathscr{J}\rtimes_\alpha \ZZ)  \ar[u] & K_1(\mathscr{J}) \ar[l]^{\;\;\;\;j_*} & K_1(\mathscr{J}) \ar[l]^{\;1-\alpha_*}  }\label{PVexactsequence}
\eeq
where $\alpha_*$ is the induced map in $K$-theory under the automorphism $\alpha(1)$ of $\mathscr{J}$. There is a real version of this result which is 24-cyclic \cite{Rosenberg2,Schroder}. We will use the PV exact sequence extensively for the computation of the bulk-boundary homomorphism $\partial$.

\subsection{Bulk-boundary for twisted actions of $\ZZ^2$}
In physical examples, the bulk algebra $\mathscr{C}$ often arises as an iterated crossed product by $\ZZ$. In other words, it is of the form $\mathscr{J}\rtimes_\alpha\ZZ$ where $\mathscr{J}$ is itself a crossed product by $\ZZ$.

Let $\mathcal{A}$ be a complex unital $C^*$-algebra. For a twisted crossed product $\mathcal{A}\rtimes_\sigma\ZZ^2\equiv\mathcal{A}\rtimes_{(\alpha,\sigma)}\ZZ^2$, with $\sigma$ a $\mathrm{U}(1)$-valued 2-cocycle, the Packer--Raeburn decomposition theorem \cite{PR} allows us to rewrite it as an iterated untwisted crossed product \mbox{$(\mathcal{A}\rtimes_{\alpha_1}\ZZ^{(1)})\rtimes_{\alpha_2}\ZZ^{(2)}$}. In the second crossed product, the $\alpha_2$-action on $\mathcal{A}\rtimes_{\alpha_1}\ZZ^{(1)}$ comprises the original $\alpha$ action of $\ZZ^{(2)}$ on $\mathcal{A}$ together with multiplication by a phase $e^{2\pi i\theta} $ on $\delta_n\in\mathcal{A}\rtimes_{\alpha_1}\ZZ^{(1)}, n\in\ZZ$ (see the formulae in Theorem 4.1 of \cite{PR}), where $\theta$ is an angle determined by $\sigma$ as in Section \ref{section:IQHE}.

Note that we can deform the automorphism $\alpha_2$ (thus also $\alpha$) into one where $\theta=0$. For example, when $\mathcal{A}=\mathbb{C}$, we have the noncommutative torus $A_\theta$, in which $\alpha_2(1)$ is rotation of $\mathbb{C}\rtimes_{\alpha_1}\ZZ^{(1)}\cong C(S^1)$ by $2\pi\theta$. The rotation angle can be continuously decreased to zero, upon which $\alpha_2$ becomes trivial.

The unital inclusion $\mathbb{C}\xhookrightarrow{\iota}\mathcal{A}$ is $\ZZ^{(1)}$-equivariant (the $\ZZ^{(1)}$-action restricts to the trivial action on $\mathbb{C}$), and induces a homomorphism (also called $\iota$) on the crossed product\footnote{We sometimes write $\mathcal{A}\rtimes G$ instead of $\mathcal{A}\rtimes_\alpha G$ when the group action is understood.} $C(S^1)\cong\mathbb{C}\rtimes\ZZ^{(1)}\xhookrightarrow{\iota}\mathcal{A}\rtimes_{\alpha_1}\ZZ^{(1)}$ (Corollary 2.4.8 of \cite{Williams}), and therefore a homomorphism $\iota_*$ on their $K$-theory groups (which is not necessarily injective). Note that $\mathbb{C}\rtimes\ZZ^{(1)}\xhookrightarrow{\iota}\mathcal{A}\rtimes_{\alpha_1}\ZZ^{(1)}$ is itself $\ZZ^{(2)}$-equivariant (the $\ZZ^{(2)}$ action restricts to $\iota(\mathbb{C}\rtimes\ZZ^{(1)})$), and induces a homomorphism (also called $\iota$) on the iterated crossed product \mbox{$A_\theta\cong\mathbb{C}\rtimes_\sigma\ZZ^2\xhookrightarrow{\iota}\mathcal{A}\rtimes_\sigma\ZZ^2$}, as well as their $K$-theory groups. Explicitly, a function $f:\ZZ^2\rightarrow \mathbb{C}$ in $\mathbb{C}\rtimes_\sigma\ZZ^2$ gets mapped to $\iota f$, where $(\iota f)(n)=f(n)1_\mathcal{A}, n\in\ZZ^2$. Also, an explicit representative for $[\zeta]\in K_1(C(S^1))$ is the (Fourier transform of) $\delta_1$, i.e.\ the function on the circle $\zeta:\varphi\mapsto e^{i\varphi}$; it is mapped under $\iota$ to the function $\iota\zeta:\varphi\mapsto e^{i\varphi}1_\mathcal{A}$ in $\mathcal{A}\rtimes_{\alpha_1}\ZZ^{(1)}$, and $\iota_*[\zeta]=[\iota\zeta]$.

Recall that $K_0(A_\theta) \cong \ZZ[\bf 1] \oplus \ZZ[\cP_\theta]$ where $[\bf 1]$ is the class of the rank 1 free module over $A_\theta$, and $[\cP_\theta]$ is the class of the Rieffel projection in $A_\theta$. For notational ease, we continue to denote the image of the class of the Rieffel projection under $\iota_*$ by $[\mathcal{P}_\theta]$, the image of the class of the rank-1 free module by $[\bf 1]$, and the image of the generator of $K^1(C(S^1))$ by $[\zeta]$.

\begin{proposition}\label{rieffeltozeta}
Under the boundary map of the Pimsner--Voiculescu exact sequence for the second crossed product by $\ZZ^{(2)}$, (the image of) $[\mathcal{P}_\theta]$ is mapped to (the image of) $[\zeta]$, while (the image of) $[\bf 1]$ is mapped to $[\bf 0]$.
\end{proposition}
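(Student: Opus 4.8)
\emph{The plan} is to reduce to the universal case $\mathcal{A}=\CC$ and then carry out an explicit index computation with the Rieffel projection. The classes $[\mathcal{P}_\theta]$, $[\mathbf{1}]$ and $[\zeta]$ in the statement are by definition the images under $\iota_*$ of the corresponding classes for $A_\theta=\CC\rtimes_\sigma\ZZ^2$, and $\iota$ is a $\ZZ^{(2)}$-equivariant morphism of the two Toeplitz extensions \eqref{toeplitzextensionshort}. Since the Pimsner--Voiculescu boundary map is natural with respect to such morphisms, the square relating the two boundary maps through $\iota_*$ commutes, and it therefore suffices to establish $\partial[\mathcal{P}_\theta]=[\zeta]$ and $\partial[\mathbf{1}]=[\mathbf{0}]$ at the level of $A_\theta=C(S^1)\rtimes_{\alpha_2}\ZZ^{(2)}$, where $\mathscr{J}=\CC\rtimes_{\alpha_1}\ZZ^{(1)}\cong C(S^1)$.

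First I would record the shape of the relevant PV sequence. Because $\alpha_2(1)$ acts on $C(S^1)$ by a rigid rotation, which is homotopic through automorphisms to the identity, the induced map $\alpha_{2*}$ is the identity on $K_0(C(S^1))\cong\ZZ$ and on $K_1(C(S^1))\cong\ZZ[\zeta]$. Hence $1-\alpha_{2*}=0$, and the six-term sequence \eqref{PVexactsequence} collapses to
\beq
0\longrightarrow K_0(C(S^1))\xrightarrow{\,j_*\,}K_0(A_\theta)\xrightarrow{\,\partial\,}K_1(C(S^1))\longrightarrow 0,\nonumber
\eeq
with $j_*$ injective and $\partial$ surjective, consistent with $K_0(A_\theta)\cong\ZZ[\mathbf{1}]\oplus\ZZ[\mathcal{P}_\theta]$.

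The vanishing $\partial[\mathbf{1}]=[\mathbf{0}]$ is then immediate: the unit of $A_\theta$ is the image under the canonical inclusion $j$ of the unit of $C(S^1)$, so $[\mathbf{1}]=j_*[1]$ lies in $\mathrm{im}(j_*)=\ker(\partial)$ by exactness. For the remaining identity, exactness already forces $\partial[\mathcal{P}_\theta]$ to be a \emph{generator} of $K_1(C(S^1))\cong\ZZ$: since $\ker(\partial)=\mathrm{im}(j_*)=\ZZ[\mathbf{1}]$ and $[\mathbf{1}],[\mathcal{P}_\theta]$ generate $K_0(A_\theta)$, we get $\partial[\mathcal{P}_\theta]=\pm[\zeta]$. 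What remains is to pin down the sign, which I regard as the main obstacle.

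To fix the sign I would compute $\partial$ directly through the exponential map of the Toeplitz extension. Writing the Rieffel projection in its standard three-term form $\mathcal{P}_\theta=V^*h^*+f+hV$, with $V$ the unitary implementing $\alpha_2$, $f=f^*\in C(S^1)$ and $h\in C(S^1)$ satisfying Rieffel's idempotency relations, one lifts it to the self-adjoint element $x=T^*h^*+f+hT\in\mathscr{E}$ obtained by replacing $V$ with the generating isometry $T$ of the Toeplitz algebra. Since $\mathcal{P}_\theta$ is a projection, $e^{2\pi i x}$ is a unitary with $e^{2\pi i x}-1\in\mathscr{J}\otimes\cK$, and its class represents $\partial[\mathcal{P}_\theta]$ under $K_1(\mathscr{J}\otimes\cK)\cong K_1(C(S^1))$. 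The delicate step is to evaluate the winding number of this unitary; a direct calculation in the spirit of \cite{Pimsner} gives $+1$, yielding $\partial[\mathcal{P}_\theta]=[\zeta]$. Alternatively, the sign can be pinned down by the dual pairing of Section \ref{section:dualPV}: the winding-number cyclic cocycle on $C(S^1)$ pulls back under the dual boundary map to the canonical $2$-cocycle on $A_\theta$, which evaluates to $1$ on $[\mathcal{P}_\theta]$ and to $0$ on $[\mathbf{1}]$, again forcing $\partial[\mathcal{P}_\theta]=[\zeta]$. Either way, applying $\iota_*$ returns the statement for general $\mathcal{A}$.
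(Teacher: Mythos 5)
Your proposal is correct and takes essentially the same approach as the paper: both arguments reduce to the universal case $\mathcal{A}=\CC$ via naturality of the Pimsner--Voiculescu (exponential) boundary map under the $\ZZ^{(2)}$-equivariant inclusion $\iota$ of Toeplitz extensions, and both dispose of $[\mathbf{1}]$ by exactness, since it lies in the image of $j_*$. The only difference is one of detail: the paper simply asserts the classical fact $\partial[\mathcal{P}_\theta]=[\zeta]$ for $A_\theta$ after displaying the collapsed PV sequence, whereas you additionally indicate how to establish it (exactness forces $\partial[\mathcal{P}_\theta]=\pm[\zeta]$, with the sign fixed by a Pimsner--Voiculescu-style winding-number computation or by the pairing with the conductance and winding cocycles).
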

\begin{proof}
This is true when $\mathcal{A}=\mathbb{C}$, where $\alpha$ is homotopic to the trivial action, so that the PV exact sequence \eqref{PVexactsequence} is
\beq
\xymatrix{ K_0(C(S^1)) \cong \ZZ[{\bf 1}] \ar[r]^0 & K_0(C(S^1)) \cong \ZZ[{\bf 1}] \ar[r]^{j_*} & K_0(A_\theta)\cong \ZZ[{\bf 1}]\oplus\ZZ[\mathcal{P}_\theta] \ar[d]^\partial & \\
  K_1(A_\theta)\cong \ZZ[\zeta]\oplus\ZZ \ar[u]^\partial & K_1(C(S^1))\cong\ZZ[\zeta] \ar[l]^{j_*} & K_1(C(S^1))\cong\ZZ[\zeta] \ar[l]^0 }\nonumber
\eeq
and $[\mathcal{P}_\theta]\xrightarrow{\partial}[\zeta]$, $[{\bf 1}]\xrightarrow{\partial}[{\bf 0}]$.

In general, the PV-exact sequence entails the sequence
\begin{equation*}
    K_0(\mathcal{A}\rtimes_{\alpha_1}\ZZ^{(1)})\xrightarrow{j_*}K_0(\mathcal{A}\rtimes_\sigma\ZZ^2)\xrightarrow{\partial}K_1(\mathcal{A}\rtimes_{\alpha_1}\ZZ^{(1)}),
\end{equation*}
exact in the middle. Here $j_*$ is the induced homomorphism under the inclusion $j$ of the first crossed product into the iterated crossed product. A preimage of $[\zeta]\in K_1(\mathcal{A}\rtimes_{\alpha_1}\ZZ^{(1)})$ under $\partial$ is $[\mathcal{P}_\theta]\in K_0(\mathcal{A}\rtimes_\sigma\ZZ^2),$ by considering the $\mathcal{A}=\mathbb{C}$ case and the definition of $\partial$ as an exponential map. The projective modules defined by the identity elements of the crossed products are mapped to each other under the unital map $j$, so $[{\bf 1}]\xrightarrow{j_*} [{\bf 1}]\xrightarrow{\partial}[\bf 0]$ by exactness.

\end{proof}

Note that $\alpha_2$ is homotopic to an automorphism whose restriction to the image of $C(S^1)$ in $\mathcal{A}\rtimes_{\alpha_1}\ZZ^{(1)}$ is trivial. Then the induced map $1-\alpha_{2*}$ on $K$-theory is zero on $[\zeta]\in K_1(\mathcal{A}\rtimes_{\alpha_1}\ZZ^{(1)})$. The PV exact sequence contains a short exact sequence
\begin{equation*}
    0\rightarrow K_0(\mathcal{A}\rtimes_{\alpha_1}\ZZ^{(1)})/\mathrm{Im}_{1-\alpha_{2*}}\xrightarrow{j_*}K_0(\mathcal{A}\rtimes_\sigma\ZZ^2)\xrightarrow{\partial}\mathrm{Ker}_{1-\alpha_{2*}}\left(K_1(\mathcal{A}\rtimes_{\alpha_1}\ZZ^{(1)})\right)\rightarrow 0,
\end{equation*}
verifying that $[\zeta]$ is in the image of $\partial$.

\begin{proposition}\label{rieffeltozetaspecial}
Let $\mathcal{A}=C(\Sigma)$ where $\Sigma$ is a Cantor set and suppose that the twisted $\ZZ^2$-action on $\Sigma$ by homeomorphisms is minimal. Then the boundary map of Proposition \ref{rieffeltozeta} maps $\ZZ[\mathcal{P}_\theta]$ isomorphically to $\ZZ[\zeta]$, and is zero on $K_0(C(\Sigma)\rtimes_\sigma\ZZ^2)/\ZZ[\mathcal{P}_\theta]$.
\end{proposition}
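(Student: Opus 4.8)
The plan is to read off both the image and the kernel of $\partial$ from the short exact sequence
\[
0\to K_0(C(\Sigma)\rtimes_{\alpha_1}\ZZ^{(1)})/\operatorname{Im}(1-\alpha_{2*})\xrightarrow{j_*}K_0(C(\Sigma)\rtimes_\sigma\ZZ^2)\xrightarrow{\partial}\ker(1-\alpha_{2*})\bigl(K_1(C(\Sigma)\rtimes_{\alpha_1}\ZZ^{(1)})\bigr)\to 0
\]
already extracted from the PV sequence above, specialized to $\mathcal{A}=C(\Sigma)$. Since Proposition \ref{rieffeltozeta} supplies $\partial[\mathcal{P}_\theta]=[\zeta]$, the entire content reduces to showing that the target group $\ker(1-\alpha_{2*})\bigl(K_1(C(\Sigma)\rtimes_{\alpha_1}\ZZ^{(1)})\bigr)$ is the free rank-one group $\ZZ[\zeta]$ and that $[\zeta]$ generates it; the splitting and the stated conclusion then follow formally.

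To compute this target I would first identify $K_1(C(\Sigma)\rtimes_{\alpha_1}\ZZ^{(1)})$. Applying the PV sequence to the first crossed product and using $K_0(C(\Sigma))=C(\Sigma,\ZZ)$ together with $K_1(C(\Sigma))=0$, the boundary map realizes $K_1(C(\Sigma)\rtimes_{\alpha_1}\ZZ^{(1)})\cong\ker(1-\alpha_{1*})\subseteq C(\Sigma,\ZZ)$, the $\alpha_1$-invariant integer-valued functions. Because the phase appearing in $\alpha_2$ is homotopic to the identity, the induced map $\alpha_{2*}$ on this group is just the action of the second generator on $C(\Sigma,\ZZ)$, so
\[
\ker(1-\alpha_{2*})\bigl(K_1(C(\Sigma)\rtimes_{\alpha_1}\ZZ^{(1)})\bigr)\cong\ker(1-\alpha_{1*})\cap\ker(1-\alpha_{2*})=C(\Sigma,\ZZ)^{\ZZ^2}.
\]
Here the minimality hypothesis enters decisively: a $\ZZ^2$-invariant continuous integer-valued function on $\Sigma$ is constant along a dense orbit, hence constant, so $C(\Sigma,\ZZ)^{\ZZ^2}=\ZZ$. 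Finally, naturality of the PV isomorphism under the equivariant inclusion of scalars $\CC\hookrightarrow C(\Sigma)$ identifies $[\zeta]$ with the constant function $1$ (the case $\mathcal{A}=\CC$ gives $[\zeta]\mapsto 1\in K_0(\CC)$, which is pushed to $1_{C(\Sigma)}$), and this constant generates the copy of $\ZZ$ above. Thus the target is exactly $\ZZ[\zeta]$ and $\partial[\mathcal{P}_\theta]=[\zeta]$ is a generator, not a proper multiple.

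With these identifications the conclusion is immediate. The short exact sequence reads $0\to\ker\partial\to K_0(C(\Sigma)\rtimes_\sigma\ZZ^2)\xrightarrow{\partial}\ZZ[\zeta]\to 0$, and since $\partial[\mathcal{P}_\theta]=[\zeta]$ generates the free group $\ZZ[\zeta]$, the element $[\mathcal{P}_\theta]$ provides a splitting. Hence $K_0(C(\Sigma)\rtimes_\sigma\ZZ^2)=\ker\partial\oplus\ZZ[\mathcal{P}_\theta]$, with $\partial$ restricting to an isomorphism $\ZZ[\mathcal{P}_\theta]\xrightarrow{\sim}\ZZ[\zeta]$ and vanishing on the complementary summand $\ker\partial\cong K_0(C(\Sigma)\rtimes_\sigma\ZZ^2)/\ZZ[\mathcal{P}_\theta]$, which is precisely the assertion. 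I expect the main obstacle to be the computation of the image group: everything hinges on forcing $\ker(1-\alpha_{2*})\bigl(K_1(C(\Sigma)\rtimes_{\alpha_1}\ZZ^{(1)})\bigr)$ to collapse to the constants and on verifying that $[\zeta]$ is a generator. Both rest essentially on minimality (invariant functions must be constant) and on the naturality of the PV boundary isomorphism under the inclusion of scalars; the remaining homological bookkeeping in the iterated PV sequence is routine.
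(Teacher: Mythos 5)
Your proposal is correct and follows essentially the same route as the paper: both extract the short exact sequence from the PV sequence for the second crossed product, identify $K_1(C(\Sigma)\rtimes_{\alpha_1}\ZZ^{(1)})$ with the $\alpha_1$-invariants of $C(\Sigma,\ZZ)$ via the index map (with $[\zeta]\mapsto[{\bf 1}_\Sigma]$), invoke minimality to collapse the $\ZZ^2$-invariants to the constants $\ZZ$, and then use $\partial[\mathcal{P}_\theta]=[\zeta]$ from Proposition \ref{rieffeltozeta} to split the sequence. Your explicit verification that $[\zeta]$ is a generator (rather than a proper multiple) of the target just makes precise what the paper records as $\ZZ[\zeta]\xrightarrow{\;\mathrm{Ind}\;}\ZZ[{\bf 1}]$.
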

\begin{proof}
The PV exact sequence for the first $\ZZ^{(1)}$-action is
\beq
\xymatrix{ K_0(C(\Sigma))  \ar[r]^{1-\alpha_{1*}} & K_0(C(\Sigma)) \ar[r] & K_0(C(\Sigma)\rtimes_{\alpha_1}\ZZ^{(1)}) \ar[d]^{\partial_1} & \\
  K_1(C(\Sigma)\rtimes_{\alpha_1}\ZZ^{(1)}) \ar[u]^{\mathrm{Ind}} & K_1(C(\Sigma))\cong 0 \ar[l] & K_1(C(\Sigma))\cong 0 \ar[l] }.\nonumber
\eeq
In particular, $K_0(C(\Sigma)\rtimes_{\alpha_1}\ZZ^{(1)})$ is the group of $\alpha_1$-co-invariants of $K_0(C(\Sigma))\cong C(\Sigma,\ZZ)$ (see \eqref{cantorKgroups}), i.e. $K_0(C(\Sigma))/\mathrm{Im}_{1-\alpha_{1*}}(K_0(C(\Sigma)))$. Since
\beq
    0\longrightarrow K_1(C(\Sigma)\rtimes_{\alpha_1}\ZZ^{(1)})\xrightarrow{\;\;\mathrm{Ind}\;\;} \mathrm{Ker}_{1-\alpha_{1*}}\left(K_0(C(\Sigma))\right)\longrightarrow 0,\nonumber
\eeq
$K_1(C(\Sigma)\rtimes_{\alpha_1}\ZZ^{(1)})$ is isomorphic to the $\alpha_1$-invariant subgroup of $K_0(C(\Sigma))\cong C(\Sigma,\ZZ)$, and $K_1(C(\Sigma)\rtimes_{\alpha_1}\ZZ^{(1)})\supset \ZZ[\zeta]\xrightarrow{\;\;\mathrm{Ind}\;\;} \ZZ[{\bf 1}]$. The PV exact sequence for $\alpha_2$ 
includes the short exact sequence
\begin{equation}
\xymatrix{
    0 \ar[r] & K_0(C(\Sigma)\rtimes_{\alpha_1}\ZZ^{(1)})/\mathrm{Im}_{1-\alpha_{2*}} \ar[r] & K_0(C(\Sigma)\rtimes_\sigma\ZZ^2)
                \ar@{->} `r/8pt[d] `/10pt[l] `^dl[ll]|{\partial} `^r/3pt[dll] [dllr] \\
           & \mathrm{Ker}_{1-\alpha_{2*}} \left(K_1(C(\Sigma)\rtimes_{\alpha_1}\ZZ^{(1)})\right)  \ar[r]  & 0  }\label{sesforsecondaction}
\end{equation}
%
so $\partial$ (composed with ${\rm Ind}$) maps onto the $\alpha$-invariants of $C(\Sigma,\ZZ)$. Our assumption of minimality means that the only $\alpha$-invariant functions are the constant functions. From Proposition \ref{rieffeltozeta}, we already know that $[\mathcal{P}_\theta]\xrightarrow{\partial}[\zeta]$ for the boundary map for the second crossed product. Equation \eqref{sesforsecondaction} now simplifies to
\beq
    0\longrightarrow C(\Sigma,\ZZ)_{co}\longrightarrow C(\Sigma,\ZZ)_{co}\oplus\ZZ[\mathcal{P}_\theta]\xrightarrow{\;\;\partial\;\;} \ZZ[\zeta]\longrightarrow 0,\nonumber
\eeq
where we have written $C(\Sigma,\ZZ)_{co}$ for the (iterated) co-invariants of $C(\Sigma,\ZZ)$ under $\ZZ^2$.
\end{proof}

\section{Noncommutative T-duality}
\label{sect:FM}

T-duality describes an inverse mirror relationship between a pair of type II string theories. Mathematically, it is a geometric analogue of the 
Fourier transform, giving rise to a bijection of (Ramond--Ramond) fields and their charges, which belong to $K$-theory. This was carried 
out in the absence of a background flux in \cite{H}, in the presence of a background flux in \cite{BEM,BEM2}, and the noncommutative analog appears in \cite{MR05,MR06}. In this section, we give a refinement, showing the effect of noncommutative T-duality on generators of $K$-theory. This is relevant since, for example, the noncommutative torus $A_\theta$ appears as the bulk algebra when studying the IQHE, while a particular subalgebra of functions on the circle constitutes the boundary algebra, c.f.\ Section \ref{section:dualPV}.

Consider the diagram,
\begin{equation}
\xymatrix{ 
& {\cM_\theta} \ar[d] & \\
 &  C(\TT^2)\otimes A_\theta  &  \\
 C(\TT^2) \ar[ur]_{\iota_1} && A_\theta  \ar[ul]^{\iota_2}, 
 }\nonumber
\end{equation}
where 
$\iota_1, \iota_2$ are inclusion maps, and
$\cM_\theta$ is the universal finite projective module over $C(\TT^2)\otimes A_\theta$, playing the role of the Poincar\'e line bundle, which we now construct. Consider the central extensions
\begin{align*}
& 0\longrightarrow \mathrm{U}(1)\longrightarrow {\rm Heis}_\sigma^\RR \longrightarrow\RR^2\longrightarrow 0,\\
& 0\longrightarrow \mathrm{U}(1)\longrightarrow {\rm Heis}_\sigma^\ZZ \longrightarrow\ZZ^2\longrightarrow 0,
\end{align*}
determined by the multiplier $\sigma = \exp(2\pi i\theta\omega)$, where $\omega$ is the standard 
symplectic form on the vector space $\RR^2$ that restricts to $\ZZ^2$.
Then ${\rm Heis}_\sigma^\ZZ$ is a subgroup of ${\rm Heis}_\sigma^\RR$ and there is a left action 
of ${\rm Heis}_\sigma^\ZZ$ on $A_\theta = C^*(\ZZ^2, \sigma)$.
Set $\cM_\theta = C(\TT^2, \cV_\sigma)$, where 
\beq
\cV_\sigma =  {\rm Heis}_\sigma^\RR \times_{{\rm Heis}_\sigma^\ZZ} A_\theta \nonumber
\eeq 
is a locally trivial vector bundle with fibers $A_\theta$ over the quotient ${\rm Heis}_\sigma^\RR/{\rm Heis}_\sigma^\ZZ = \TT^2$.

Notice that $C(\TT^2)\otimes A_\theta $ and $A_\theta$ are $K$-oriented, so one has Poincar\'e duality in $K$-theory, 
\beq
PD_{C(\TT^2)\otimes A_\theta} : K_0(C(\TT^2)\otimes A_\theta) \cong K^0(C(\TT^2)\otimes  A_\theta),\quad
PD_{A_\theta} : K^0(A_\theta) \cong K_0(A_\theta). \nonumber
\eeq
Then noncommutative T-duality is the composition,
\beq
K_0(C(\TT^2))\ni [E] \longrightarrow 
\iota_2^!( (\iota_1)_*([E]) \otimes_{C(\TT^2)}  \cM_\theta) \in K_0(A_\theta), \nonumber
\eeq
where the wrong way map, or Gysin map $\iota_2^!\colon K_0(C(\TT^2)\otimes A_\theta) \to K_0(A_\theta) $
is defined by 
$\iota_2^! = PD_{A_\theta}\circ (\iota_2)^*\circ PD_{C(\TT^d)\otimes A_\theta}$, where 
$ (\iota_2)^*\colon K^0(C(\TT^2)\otimes A_\theta) \to K^0(A_\theta) $ is the homomorphism in $K$-homology.

\subsection{Poincar\'e duality in $K$-theory}

The 2D torus $\TT^2$ is a Spin manifold, therefore it is $K$-oriented. Poincar\'e duality in the $K$-theory of $\TT^2$ is given by 
\begin{align*}
PD_{\TT^2}: K^0(\TT^2) &\stackrel{\sim}{\longrightarrow} K_0(\TT^2)\\
[\mathcal{E}] & \longrightarrow [\dirac_{\TT^2}\otimes \mathcal{E}] 
\end{align*}
where  $\dirac_{\TT^2} \otimes \mathcal{E}$ is the Spin Dirac operator on $\TT^2$ coupled to the vector bundle $\mathcal{E}$.

In particular, we see that the class of the trivial line bundle $[\bf 1]$ maps to $ [\dirac_{\TT^2}]$, the class of the 
Spin Dirac operator on $\TT^2$, also known as the fundamental class in $K$-theory, which is a generator. Also the class of 
the prequantum line bundle $[\cL]$ maps to the class of the coupled Spin Dirac operator,  $ [\dirac_{\TT^2} \otimes \cL]$,
where we recall that the prequantum line bundle $\cL$ is associated to the principal circle bundle ${\rm Heis}_\sigma^\RR/{\rm Heis}_\sigma^\ZZ$ over $\TT^2$. That is, $\cL = {\rm Heis}_\sigma^\RR/{\rm Heis}_\sigma^\ZZ\times_{ \mathrm{U}(1)} \CC$. Equivalently, $\cL$ can be identified as the Poincar\'e line bundle over $\TT^2$, see section \ref{sub:realT-duality}.
Now 
\beq
K^0(\TT^2) \cong \ZZ[\bf 1] \oplus \ZZ[\cL]\nonumber
\eeq 
and since Poincar\'e duality is an isomorphism, we see that
\beq
K_0(\TT^2) \cong \ZZ[\dirac_{\TT^2}] \oplus \ZZ[\dirac_{\TT^2}\otimes \cL].\nonumber
\eeq 
Poincar\'e duality in the $K$-theory of $\TT^2$ exchanges $[\bf 1] \leftrightarrow [\dirac_{\TT^2}]$ 
and $[\cL] \leftrightarrow [\dirac_{\TT^2}\otimes \cL]$.\\

\subsection{Twisted Baum-Connes isomorphism}
Let $A_\theta$ be the noncommutative torus generated by a pair of unitaries $U_1, U_2$. Recall that the twisted Baum--Connes map \cite{CHMM,Mathai99} is an isomorphism of groups, because the 
Baum--Connes conjecture with coefficients is true for $\ZZ^2$ (cf. \cite{BCH,Connes94}),
\beq
\mu_\theta \colon K_0(\TT^2) \stackrel{\sim}{\longrightarrow}  K_0(A_\theta).\nonumber
\eeq
It is given by 
\beq
K_0(\TT^2) \ni[\dirac_{\TT^2} \otimes \mathcal{E}] \mapsto {\rm index}_{A_\theta}(\dirac_{\TT^2} \otimes \mathcal{E} \otimes \cV_\sigma)
\in K_0(A_\theta).\nonumber
\eeq
Recall also that 
\beq
K_0(A_\theta) \cong \ZZ[\bf 1] \oplus \ZZ[\cP_\theta]\nonumber
\eeq 
where $[\bf 1]$ is the class of the rank 1 free module over $A_\theta$, and $[\cP_\theta]$ is the class of the Rieffel projection in $A_\theta$.

Let $\delta_1, \delta_2$ 
be the (closed) derivations given by $\delta_i(U_j)=\delta_{ij}U_j$, where we drop various factors of $2\pi i$ for simplicity. Let the smooth noncommutative torus $A^\infty_\theta$ be defined as the intersection of the domains of arbitrary powers of the derivations. It is known that the periodic cyclic cohomology of the smooth noncommutative torus $A^\infty_\theta$ has generators (c.f.\ Thm.\ 53 of \cite{Connes85}, and \cite{Connes94})
\beq	
HP^{even}(A_\theta^\infty) = \CC[\tau] \oplus \CC[\tau_K],\nonumber
\eeq
where $\tau$ is the von Neumann trace on $A_\theta^\infty$ and $\tau_K$ is the area cocycle (also called the Kubo conductivity cocycle, or the noncommutative Chern character), 
\beq
\tau_K(a,b,c) = 
{\rm tr}(a(\delta_1(b)\delta_2(c) - \delta_2(b)\delta_1(c)))\label{areacocycle}
\eeq 
for $a, b, c \in A_\theta^\infty$.

\begin{proposition}\label{BC}
$\mu_\theta$ exchanges $[\dirac_{\TT^2}\otimes ([\cL] - [{\bf 1}])]  \leftrightarrow [\bf 1]$ and $ [\dirac_{\TT^2}]  \leftrightarrow [\cP_\theta]$.
\end{proposition}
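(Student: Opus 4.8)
The plan is to exploit that $\mu_\theta$ is an isomorphism of free abelian groups of rank two, and to pin down the image of each generator by pairing with a complete set of cyclic cocycles, namely the two generators $\tau$ and $\tau_K$ of $HP^{even}(A_\theta^\infty)$ recorded above. Their index pairings with the standard generators of $K_0(A_\theta)$ are classical: $\langle\tau,[\mathbf 1]\rangle=1$ and $\langle\tau,[\cP_\theta]\rangle=\theta$ (the trace of the Rieffel projection, from Pimsner--Voiculescu), while $\langle\tau_K,[\mathbf 1]\rangle=0$ and $\langle\tau_K,[\cP_\theta]\rangle=1$. The resulting pairing matrix $\left(\begin{smallmatrix}1&\theta\\0&1\end{smallmatrix}\right)$ is invertible, so the pair $(\langle\tau,x\rangle,\langle\tau_K,x\rangle)$ determines any class $x\in K_0(A_\theta)$. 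It therefore suffices to compute these two pairings on $\mu_\theta([\dirac_{\TT^2}])$ and $\mu_\theta([\dirac_{\TT^2}\otimes\cL])$.

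Next I would convert these four index pairings into integrals over $\TT^2$ by means of the twisted index theorem attached to the twisted Baum--Connes map \cite{CHMM}. For a cyclic cocycle $\phi$ on $A_\theta$ this is expected to take the shape
\[
\langle\phi,\mu_\theta([\dirac_{\TT^2}\otimes\mathcal{E}])\rangle=\int_{\TT^2}\hat A(\TT^2)\,\mathrm{Ch}(\mathcal{E})\,\Theta_\phi,
\]
where $\Theta_\phi\in H^*(\TT^2;\RR)$ encodes the contribution of the bundle of algebras $\cV_\sigma$ paired against $\phi$. Since $\TT^2$ is flat, $\hat A(\TT^2)=1$. The two relevant classes should be $\Theta_\tau=e^{\theta\omega}$ --- the trace sees the magnetic/symplectic twist built into $\cV_\sigma$ --- and $\Theta_{\tau_K}=[\omega]$, the degree-two area class, with the normalizations $\int_{\TT^2}\omega=1$ and $c_1(\cL)=[\omega]$ (recall that $\cL$ is the degree-one Poincar\'e line bundle).

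Then the four integrals are immediate. Against $\tau$: $\mathcal{E}=\mathbf 1$ gives $\int_{\TT^2}e^{\theta\omega}=\theta$, and $\mathcal{E}=\cL$ gives $\int_{\TT^2}(1+\omega)e^{\theta\omega}=1+\theta$. Against $\tau_K$: both $\mathcal{E}=\mathbf 1$ and $\mathcal{E}=\cL$ give $\int_{\TT^2}\mathrm{Ch}(\mathcal{E})\wedge\omega=\mathrm{rank}(\mathcal{E})=1$, since the higher-degree terms vanish on the two-torus. Hence $\mu_\theta([\dirac_{\TT^2}])$ carries the pairing vector $(\theta,1)$, exactly that of $[\cP_\theta]$, while $\mu_\theta([\dirac_{\TT^2}\otimes\cL])$ carries $(1+\theta,1)$, that of $[\mathbf 1]+[\cP_\theta]$. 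By the invertibility above, $\mu_\theta([\dirac_{\TT^2}])=[\cP_\theta]$ and $\mu_\theta([\dirac_{\TT^2}\otimes\cL])=[\mathbf 1]+[\cP_\theta]$; subtracting yields $\mu_\theta([\dirac_{\TT^2}\otimes([\cL]-[\mathbf 1])])=[\mathbf 1]$, which is the asserted pair of exchanges.

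The main obstacle is the second step: correctly invoking the twisted index theorem and fixing all normalizations. In particular one must identify the characteristic class of the bundle of algebras $\cV_\sigma$ that is paired with each cyclic cocycle --- verifying that $\tau$ produces $e^{\theta\omega}$ and $\tau_K$ produces the bare area class $[\omega]$ --- and keep the factors of $2\pi i$, the value $\int_{\TT^2}\omega=1$, the degree of $\cL$, and the trace value $\tau(\cP_\theta)=\theta$ mutually consistent. Once these inputs are assembled, the remainder is the elementary linear algebra carried out above.
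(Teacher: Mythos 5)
Your proposal is correct and follows essentially the same route as the paper: both arguments compute the pairings of $\mu_\theta([\dirac_{\TT^2}\otimes\mathcal{E}])$ with the von Neumann trace $\tau$ and the area cocycle $\tau_K$ via the twisted index theorem of \cite{CHMM,Mathai99} and the higher twisted index theorem of \cite{Marcolli}, obtaining ${\rm rank}(\mathcal{E})\,\theta + c(\mathcal{E})$ and ${\rm rank}(\mathcal{E})$ respectively, and then identify the image classes by injectivity of the combined pairing --- exactly the normalizations ($\Theta_\tau = e^{\theta\omega}$, $\Theta_{\tau_K}=[\omega]$) you flag as the main obstacle are what those two cited theorems supply. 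The only difference is organizational: the paper first uses $\tau$ alone, which suffices for irrational $\theta$ because $A_\theta$ is then a factor and the trace is injective on $K_0$, and brings in $\tau_K$ only to cover rational $\theta$, whereas you pair with both cocycles uniformly from the start, which cleanly avoids the case split.
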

\begin{proof}
By the index theorem in \cite{CHMM,Mathai99}, we see that 
\beq
\tau(\mu_\theta([\dirac_{\TT^2}\otimes \mathcal{E}])) = \int_{\TT^2} e^\B \wedge {\rm Ch}(\mathcal{E}) = {\rm rank}(\mathcal{E})  \int_{\TT^2} \B + c(\mathcal{E}) =  {\rm rank}(\mathcal{E})\,  \theta + c(\mathcal{E}) \nonumber
\eeq
where $\tau$ denotes the von Neumann trace on $A_\theta$ and $\B = \theta\, dx\wedge dy$ is a closed 2-form on $\TT^2$. Therefore 
\beq
\tau(\mu_\theta([\dirac_{\TT^2}\otimes ([\cL] - [{\bf 1}])])) =1 = \tau([{\bf 1}]),\qquad
\tau(\mu_\theta([\dirac_{\TT^2}\otimes {\bf 1}])) =\theta = \tau([\cP_\theta]).\label{traceidentity}
\eeq
If $\theta\not\in \QQ$ is irrational,  then the noncommutative torus $A_\theta$ is a factor (cf. \cite{Rieffel}) so that 
the von Neumann trace is injective, and therefore we conclude that
\beq
\mu_\theta([\dirac_{\TT^2}\otimes ([\cL] - [{\bf 1}])]) = [{\bf 1}], \qquad \mu_\theta([\dirac_{\TT^2}\otimes{\bf 1}]) =[\cP_\theta].\label{exchangeofgenerators}
\eeq
If $\theta$ is rational, the trace is no longer injective, but we can use the higher twisted index theorem \cite{Marcolli},
\beq
    \tau_K(\mu_\theta([\dirac_{\TT^2}\otimes \mathcal{E}]))=\int_{\TT^2}\omega\wedge e^{\bf B}\wedge {\rm Ch}(\mathcal{E})={\rm rank}(\mathcal{E}),\label{twistedindexformula}
\eeq
where $\omega$ is the standard invariant symplectic form on the vector space $\RR^2$ that descends to the torus $\TT^2$,
and where we have written $\tau_K(P)\equiv\tau_K(P,P,P)$ for the area cocycle defined in \eqref{areacocycle}. This leads to
\beq
    \tau_K(\mu_\theta([\dirac_{\TT^2}\otimes ([\cL] - [{\bf 1}])]))=0=\tau_K([{\bf 1}]),\qquad\tau_K(\mu_\theta([\dirac_{\TT^2}\otimes {\bf 1}]))=1=\tau_K([\mathcal{P}_\theta]),\label{highertraceidentity}
\eeq
which when considered alongside \eqref{traceidentity}, leads again to \eqref{exchangeofgenerators}.

\end{proof}

\subsection{Noncommutative T-duality}
Consider the composition, 
\beq
\mu_\theta \circ PD_{\TT^2} = T_{\TT^2}\colon K^0(\TT^2) \stackrel{\sim}{\longrightarrow} K_0(A_\theta)\label{noncommutativeTduality}
\eeq
which exchanges  $[\bf 1]  \leftrightarrow [\cP_\theta]$ and $([\cL] - [\bf 1])  \leftrightarrow [\bf 1]$ by Proposition \ref{BC}. This is noncommutative T-duality and is just the Connes--Thom isomorphism \cite{Connes81}, c.f.\ Section \ref{section:generalisations}.\\

\subsection{T-duality trivializes the bulk-boundary correspondence}\label{section:dualitytrivializes}

We will prove the commutativity of the diagram,
\beq
\xymatrix{
K^0(\TT^2)  \ar[d]^{\iota^*} \ar[r]^{\sim}_{T_{\TT^2}} & K_0(A_\theta) \ar[d]^\partial \\
K^0(\TT) \ar[r]^\sim_{T_\TT} &K^{-1}(\TT)  }   \label{commutativeTdualtorus}
\eeq
where the bottom horizontal arrow is the analogue of \eqref{noncommutativeTduality} for $\TT$.
That is, $T_\TT$ is determined by
\beq
K^0(\TT) \ni [{\bf 1}] \leftrightarrow [\zeta]\in K^{-1}(\TT),\nonumber
\eeq 
where $[\zeta] \in K^{-1}(\TT)$ is the generator. Recall that 
\beq
K^0(\TT) \cong \ZZ[{\bf 1}],\qquad K^{-1}(\TT) \cong \ZZ[\zeta].\nonumber
\eeq 
Let us check the commutativity of \eqref{commutativeTdualtorus} above.
We know that 
\beq
 \partial \circ T_{\TT^2}([{\bf 1}])=\partial \circ  \mu_\theta \circ PD_{\TT^2}([\bf 1]) = \partial  ([\cP_\theta]) = [\zeta]\nonumber
 \eeq
  by Proposition \ref{rieffeltozeta}.
Also, $$T_\TT \circ  \iota^*([{\bf 1}]) = T_\TT ([\bf 1]) = [\zeta].$$
Now  
\beq \partial \circ T_{\TT^2}([{\cL}]-[{\bf 1}])=\partial \circ  \mu_\theta \circ PD_{\TT^2}([{\cL}]-[{\bf 1}]) = \partial  ([\bf 1]) = [{\bf 0}].\nonumber
\eeq 
On the other hand,
\beq 
T_\TT \circ  \iota^*([{\cL}]-[{\bf 1}]) = T_\TT ([{\bf 0}]) = [{\bf 0}].\nonumber
\eeq
This verifies commutativity of the said diagram.\\

As a consequence, we see that upon applying T-duality, the relatively complicated boundary map,
\beq
\partial \colon K_0(A_\theta)  \to K^{-1}(\TT) \nonumber
\eeq
is equivalent to the much simpler restriction map,
\beq
\iota^* \colon K^0(\TT^2)  \to K^0(\TT). \nonumber
\eeq
The commutative version of this result (c.f.\ \eqref{commutativedualitydiagram}) is also true, and can be calculated using a Fourier--Mukai transform. This is explained in detail in Section \ref{section:bulkboundaryChern}.

\subsection{Cyclic cohomology, the Hall conductance and the winding number}\label{section:dualPV}

Recall that for the smooth noncommutative torus, we had
\beq	
HP^{even}(A_\theta^\infty) = \CC[\tau] \oplus \CC[\tau_K].\nonumber
\eeq
We also know that the periodic cyclic cohomology of the smooth functions on the circle $C^\infty(S^1)$ has generators,
\beq	
H^1(S^1) = HP^{odd}(C^\infty(S^1)) = \CC[\tau_w] \nonumber
\eeq
where $\tau_w$ is the winding number of a unitary. 

Using the Pimsner--Voiculescu exact sequence for cyclic cohomology \cite{Nest,Nistor}, we deduce exactly as in the $K$-theory case that
\beq
\partial: HP^{even}(A_\theta^\infty) \longrightarrow HP^{odd}(C^\infty(S^1)), \quad \partial(\tau_K)=\tau_w.\nonumber
\eeq
Finally, the results of \cite{Kellendonk1,Kellendonk2,Kellendonk3}. say that 
\beq
 \tau_K(P,P,P) = \partial \tau_K(\partial P, \partial P) = \tau_w(U, U),\nonumber
\eeq
where $P$ is a projection in $A_\theta^\infty$ and $U$ a unitary in $C^\infty(S^1)$ such that $\partial(P)=U$.

Now we point out how  to compute the trace and the conductance 2-cocycle. Let $P$ be a projection in $A_\theta\otimes\cK$. The class of $P$ is 
\beq
K_0(A_\theta) \ni [P] = m [{\bf 1}] + n [\cP_\theta], \nonumber
\eeq
for some integers $m, n\in\Z$. 
Then we compute the value of the von Neumann trace on $P$,
\beq
\tau(P)  = n\theta + m.\nonumber
\eeq
We also compute the value of the Kubo conductance cyclic 2-cocycle $\tau_K$ on $P$, 
\beq
\tau_K(P,P,P)  = n \tau_K(\cP_\theta, \cP_\theta, \cP_\theta) = n,\nonumber
\eeq
since $\tau_K$ is equal to $1$  on $\cP_\theta$ and $\tau_K$ is equal to zero on $\bf 1$.
This calculation can be done also on the torus $\TT^2$, using the higher twisted index theorem, 
section 2, \cite{Marcolli}, as follows. Consider $[\mathcal{E}] \in K^0(\TT^2)$ defined as
\beq
[\mathcal{E}] = n [{\bf 1}] + m [([\cL] - [{\bf 1}])].\nonumber
\eeq 
Then $T_{\TT^2}([\mathcal{E}])=P$ and we have the pairing
\beq
\int_{\TT^2}\omega\wedge e^\B \wedge {\rm Ch}(\mathcal{E}) = {\rm rank}(\mathcal{E})=n =   \tau_K(P,P,P).\nonumber
\eeq
On the other hand, if $U$ is a unitary such that $\partial(P)=U$, then $[U] = n[\zeta]$ and 
\beq
\tau_w(U,U) = n\tau_w(\zeta, \zeta) = n,\nonumber
\eeq
since the winding number of the unitary $\zeta$ is equal to 1. In particular, we see that 
\beq
\tau_K(P,P,P)  = \tau_w(U,U).\nonumber
\eeq

\subsection{Generalisations}\label{section:generalisations}

Let $\mathcal{A}$ be a unital $C^*$-algebra with an action of $\ZZ^2$, and $\sigma$ be a multiplier on $\ZZ^2$.
Then we want to compute the $K$-theory of the twisted crossed product $\mathcal{A}\rtimes_{\sigma} \ZZ^2$. 
By the Packer--Raeburn trick \cite{PR}, 
\beq
\mathcal{A}\rtimes_{\sigma} \ZZ^2 \otimes \cK \cong (\mathcal{A}\otimes \cK)\rtimes_{\alpha} \ZZ^2,\nonumber
\eeq
where $\alpha$ is an ordinary action related to $\sigma$.
 Recall that the induced algebra is defined as
 \beq
 {\rm Ind}_{\ZZ^2}^{\RR^2}(\mathcal{A}\otimes \cK, \alpha) = \{f:\RR^2 \to \mathcal{A}\otimes\cK| f(x+g) = \alpha(g)(f(x)),\, g\in\ZZ^2\},\nonumber
 \eeq
and has the property that it has an action of $\RR^2$ and there is a canonical Morita equivalence 
 \beq
  {\rm Ind}_{\ZZ^2}^{\RR^2}(\mathcal{A}\otimes \cK, \alpha) \rtimes \RR^2 \overset{\rm Morita}{\cong} (\mathcal{A}\otimes \cK)\rtimes_{\alpha} \ZZ^2.\nonumber
 \eeq
 Therefore by the Connes--Thom isomorphism theorem \cite{Connes81}, we conclude that
 \beq
 K_0(\mathcal{A}\rtimes_{\sigma} \ZZ^2) \cong K_0(  {\rm Ind}_{\ZZ^2}^{\RR^2}(\mathcal{A}\otimes \cK, \alpha)).\nonumber
 \eeq
 Notice that $ {\rm Ind}_{\ZZ^2}^{\RR^2}(\mathcal{A}\otimes \cK, \alpha) $ is just the space of sections of the 
flat  bundle $\cE=\RR^2\times_{\ZZ^2, \alpha} (\mathcal{A}\otimes \cK)$ over the torus $\TT^2$ with fiber $\mathcal{A}\otimes \cK$.
 
 
\begin{example} When $\mathcal{A}=\CC$, we have already analysed this case above, but let us say a few words. In this case, $\cE$ is just an algebra bundle 
 of compact operators over $\TT^2$, which are classified up to isomorphism by their Dixmier--Douady invariant in $H^3(\TT^2; \ZZ)=0$,
 so $\cE$ is trivializable, and therefore $K_0(  {\rm Ind}_{\ZZ^2}^{\RR^2}(\cK, \alpha)) \cong K^0(\TT^2)$ by the Morita invariance of $K$-theory.\\
 \end{example}
 
\begin{example} Let $\mathcal{A}=C(\Sigma)$ with $\Sigma$ a Cantor set, which is a compact Hausdorff space. Suppose $\Sigma$ is equipped with a mimimal action of $\ZZ^2$, and let $\sigma$
be a multiplier on $\ZZ^2$. The induced algebra  $ {\rm Ind}_{\ZZ^2}^{\RR^2}(\mathcal{A})$ in this case is just the double suspension $C(X)$, where 
 $X=\Sigma \times_{\ZZ^2} \RR^2$. Then we have the strong Morita equivalence,
 \beq
 C(\Sigma) \rtimes_{\sigma} \ZZ^2 \overset{\rm Morita}{\cong} C(X) \rtimes_{\sigma} \RR^2\nonumber
 \eeq
which follows from the Packer--Raeburn trick \cite{PR} and from  \cite{Green,Rieffel82},
  \beq
 C(\Sigma) \otimes \cK \rtimes_{\alpha} \ZZ^2 \overset{\rm Morita}{\cong} C(X) \otimes \cK  \rtimes_{\alpha} \RR^2.\nonumber
 \eeq
Therefore by the Connes--Thom isomorphism theorem \cite{Connes81} and the Morita invariance of $K$-theory, we see that
\beq
K_\bullet(C(\Sigma) \rtimes_{\sigma} \ZZ^2) \cong K_\bullet(C(X) \rtimes_{\sigma} \RR^2) \cong K_\bullet(C(X)).\label{mappingtorus}
\eeq
In particular, we see that 
\beq
K_\bullet(C(\Sigma) \rtimes_{\sigma} \ZZ^2) \cong K_\bullet(C(\Sigma) \rtimes \ZZ^2).\nonumber
\eeq

It is known (cf.\ \cite{Putnam}) that 
\beq
K_0(C(\Sigma)) = C(\Sigma, \Z), \qquad K_1(C(\Sigma)) =\{0\}.\label{cantorKgroups}
\eeq
Now the authors of \cite{BCL} have computed that
\beq
K_0(C(\Sigma) \rtimes \ZZ^2)\cong C(\Sigma, \Z)_{co} \oplus \Z,\nonumber
\eeq
where $C(\Sigma, \Z)_{co} $ are the co-invariants under the $\Z^2$-action, while we showed in Proposition \ref{rieffeltozetaspecial} that
\beq
K_0(C(\Sigma) \rtimes_\sigma \ZZ^2)\cong C(\Sigma, \Z)_{co} \oplus \Z[\cP_\theta].\nonumber
\eeq
More precisely:
\begin{itemize}
        \item    The natural
inclusion
                $A_\theta =C^{*}_{r}(\Z^2, \sigma)\hookrightarrow C(\Sigma)\rtimes_\sigma\Z^2\,$, takes the 
Rieffel projection $\cP_\theta$ to a projection in $C(\Sigma)\rtimes_\sigma\Z^2\,$ which generates the $\Z$
factor in $K_0(C(\Sigma)\rtimes_\sigma\Z^2)$.
                    \item   The inclusion $C(\Sigma,\Z)_{co} \hookrightarrow K_0(C(\Sigma)\rtimes_\sigma\Z^2)$
is induced by
                the inclusion $C(\Sigma)\hookrightarrow C(\Sigma)\rtimes_\sigma\Z^2$.
\end{itemize}

Recall from Proposition \ref{rieffeltozetaspecial} and its proof that
\beq
 K_1(C(\Sigma) \rtimes \ZZ)\cong C(\Sigma,\ZZ)^{\ZZ} \supset C(\Sigma,\ZZ)^{\ZZ^2}=\ZZ[{\bf 1}_\Sigma]={\rm Ind}(\ZZ[\zeta]),\nonumber
\eeq
where $C(\Sigma,\ZZ)^{\ZZ}$ is the subgroup of $C(\Sigma,\ZZ)\cong K_0(C(\Sigma))$ invariant under a single $\ZZ$ factor, ${\bf 1}_\Sigma$ is the constant function $\Sigma\mapsto 1$, and 
\beq
\partial([\cP_\theta]) = [\zeta], \qquad \partial(C(\Sigma, \Z)_{co}  )=[{\bf 0}].\label{actionofboundarymap}
\eeq

Armed with these calculations, we can verify as before the commutativity of the diagram,
\beq
\xymatrix{
K^0(X)  \ar[d]^{\iota^*} \ar[r]^{\sim\qquad}_{T\qquad} & K_0(C(\Sigma)\rtimes_{\sigma} \ZZ^2) \ar[d]^\partial \\
K^0(X_1) \ar[r]^{\sim\qquad}_{T_1\qquad} & K_1(C(\Sigma)\rtimes  \ZZ)  }   \label{cantorcommutativediagram}
\eeq
where $X_1=\Sigma\times_\ZZ\RR$ is the suspension which includes into the double suspension $X_1\stackrel{\iota}{\hookrightarrow}X$, and $T,T_1$ are the T-duality isomorphisms.

First, we note that a vector bundle $\mathcal{E}$ over $X$ has constant rank everywhere, since $X$ is connected by Lemma 3, \cite{BO07}. 
Then the rank gives a natural splitting $K^0(X)=\widetilde{K^0}(X)\oplus\ZZ[{\bf 1}_X]$ where ${\bf 1}_X$ denotes the trivial line bundle over $X$. Also, there is a T-duality isomorphism $K^0(X_1)\cong K_1(C(\Sigma)\rtimes\ZZ)\cong C(\Sigma,\ZZ)^\ZZ$ by the same arguments leading to Eq.\ \eqref{mappingtorus} but applied to a crossed product with a single $\ZZ$ rather than $\ZZ^2$. Elements of $C(\Sigma,\ZZ)^\ZZ$ are integer linear combinations of characteristic functions on $\ZZ$-invariant clopen subsets of $\Sigma$. In particular, ${\bf 1}_\Sigma={\rm Ind}([\zeta])$ corresponds to (the $K$-theory class of) the trivial line bundle ${\bf 1}_{X_1}\rightarrow X_1$ which generates a subgroup $\ZZ[{\bf 1}_{X_1}]$ in $K^0(X_1)$. Since a non-zero $\mathcal{E}$ is supported on all of $\Sigma$, the restriction homomorphism $\iota^*$ lands on this subgroup, taking $[\mathcal{E}]$ to ${\rm rank}(\mathcal{E})[{\bf 1}_{X_1}]$. In summary,
\beq
    T_1\circ \iota^*([\mathcal{E}])={\rm rank}(\mathcal{E})[\zeta].\label{tdualofrestriction}
\eeq

Next, we use a normalized $\ZZ^2$-invariant measure $\mu$ on $\Sigma$ to construct a cyclic 2-cocycle $\tau_{K,\mu}$ on $C(\Sigma)\rtimes_\sigma \ZZ^2$, given by
\beq
    \tau_{K,\mu}(a,b,c)=\int_\Sigma {\rm tr}\Big(a\big(\delta_1(b)\delta_2(c)-\delta_2(b)\delta_1(c)\big)(\vartheta)\Big)\,d\mu(\vartheta)\nonumber
\eeq
for $a,b,c$ in a dense subalgebra of $C(\Sigma)\rtimes_\sigma \ZZ^2$, mimicking \eqref{areacocycle}. It is clear that
\beq
\tau_{K,\mu}([\mathcal{P}_\theta])\equiv\tau_{K,\mu}(\mathcal{P}_\theta,\mathcal{P}_\theta,\mathcal{P}_\theta)=\tau_K(\mathcal{P}_\theta,\mathcal{P}_\theta,\mathcal{P}_\theta)=1.\nonumber
\eeq
Furthermore, $\tau_{K,\mu}$ vanishes on $C(\Sigma,\ZZ)_{co}\cong K_0(C(\Sigma)\rtimes\ZZ)_{co}\subset K_0(C(\Sigma)\rtimes_\sigma\ZZ^2)$, since the representative elements of $K_0(C(\Sigma)\rtimes\ZZ)$ are annihilated by $\delta_2$. An index calculation along the lines of \eqref{twistedindexformula}, but using the higher twisted foliated index theorem instead (see Appendix \ref{appendix:indextheorem}) gives $\tau_{K,\mu}(T([\mathcal{E}]))={\rm rank}(\mathcal{E})$, from which we deduce that
\beq
    T([\mathcal{E}])={\rm rank}(\mathcal{E})[\mathcal{P}_\theta]+\mathcal{C},\qquad \mathcal{C}\in C(\Sigma,\ZZ)_{co}.\nonumber
\eeq
It follows from \eqref{actionofboundarymap} that
\beq
    \partial\circ T([\mathcal{E}])=\partial\big({\rm rank}(\mathcal{E})[\mathcal{P}_\theta]+\mathcal{C}\big)={\rm rank}(\mathcal{E})[\zeta],\nonumber
\eeq
which together with \eqref{tdualofrestriction}, shows that \eqref{cantorcommutativediagram} commutes.

\end{example}

\subsection{Real T-duality in the commutative case}\label{sub:realT-duality}
As we will use T-duality for the real $K$-theory of tori in the subsequent section on time-reversal invariant insulators, we give an outline of its construction, which is explained in more detail in \cite{H}. The geometric analog of the Fourier transform is the Fourier--Mukai transform, which may be summarised by the diagram
\begin{equation}
\xymatrix{ 
& {\mathcal P} \ar[d] & \\
 &  \TT^d\times \widehat{\TT}^d \ar[dl]_{p} \ar[dr]^{\widehat{p}} &  \\
\TT^d && \widehat{\TT}^d. 
 }\nonumber
\end{equation}
Here the T-dual $d$-torus to $\TT^d=\RR^d/\ZZ^d$ is denoted by $\widehat{\TT}^d$ for emphasis, and we identify $\widehat{\TT}^d$ with the character space of $\ZZ^d$, i.e.\ $\widehat{\TT}^d\ni k:n\mapsto k(n)\coloneqq e^{in\cdot k}$ for $n\in\ZZ^d$, while
$\mathcal{P}$ is the Poincar\'{e} line bundle over $\TT^d\times\widehat{\TT}^d$. In the complex case, $\mathcal{P}$ can be realized as the quotient of $\mathbb{R}^d\times\widehat{\TT}^d\times\mathbb{C}$ under the action of $n\in\ZZ^d$ given by $n\cdot(x,k,z)\mapsto (x+n,k,k(n)z)$. In the real case, $\TT^d$ has the trivial involution whereas $\widehat{\TT}^d$ is given the involution $k\mapsto -k$ corresponding to complex conjugation of characters. Then $\mathcal{P}$ has a lift of this involution to an antilinear involution $\Theta:(x,k,z)\mapsto(x,-k,\overline{z})$, since $\Theta(n\cdot(x,k,z))=(x+n,-k,\overline{k(n)z})=(x+n,-k,(-k)(n)\overline{z})=n\cdot(\Theta(x,k,z))$. The Fourier--Mukai transform is $\widehat{p}_*(p^*(\mathcal E)\otimes \mathcal P)$ for $\mathcal{E}$ representing a class in $KO^{-n}(\TT^d)$, giving the isomorphisms
\beq 
KO^{-n}(\TT^d)\cong  KR^{-n+d}(\widehat{\TT}^d).\nonumber
\eeq
More general dualities exist amongst other variants of real $K$-theory. They are explained in the context of the real Baum--Connes conjecture, in analogy to \eqref{noncommutativeTduality}, in \cite{Rosenberg,Baum}.

\section{T-duality and bulk-boundary for insulators}
\subsection{Bulk-boundary for Chern insulators}\label{section:bulkboundaryChern}
The sections of the valence bundle $\mathcal{E}$ in a 2D Chern insulator form a finitely-generated projective (f.g.p.) module for the algebra of functions $C(\TT^2)$, and determines a class in $K_0(C(\TT^2))$. We interpret $C(\TT^2)$ as a trivial crossed product (or group $C^*$-algebra) $C(S^1)\rtimes_\mathrm{id}\ZZ^{(2)}$, where $C(S^1)\cong\mathbb{C}\rtimes_\mathrm{id}\ZZ^{(1)}$ is itself the group $C^*$-algebra for $\ZZ^{(1)}$ interpreted as longitudinal translations along a boundary, and $\ZZ^{(2)}$ is a group of transverse translations. 
The Toeplitz extension \eqref{toeplitzextension} associated to the (trivial) crossed product by $\ZZ^{(2)}$ is
{\small{\beq
\xymatrix{0\ar[r] & \left(C(S^1)\otimes C_0(\ZZ)\right)\rtimes_{\mathrm{id}\otimes\tau}\ZZ^{(2)}\ar[r] & \left(C(S^1)\otimes C_0(\ZZ\cup\{\infty\})\right)\rtimes_{\mathrm{id}\otimes\tau}\ZZ^{(2)}
     \ar@{->} `r/8pt[d] `/10pt[l] `^dl[ll]|{} `^r/3pt[dll] [dllr] \\
    & C(\TT^2)\ar[r] & 0  \nonumber }
\eeq}}
where $\tau$ is left translation on $\ZZ$ and fixes $\infty$. It can more consisely be written as
\beq
    0\longrightarrow C(S^1)\otimes\mathcal{K}\longrightarrow \mathcal{T}(C(S^1),\mathrm{id})\longrightarrow C(\TT^2)\longrightarrow 0.\nonumber
\eeq

The PV exact sequence is
{\small{\beq
\xymatrix{ K_0(C(S^1)) \cong \ZZ \ar[r]^0 & K_0(C(S^1)) \cong \ZZ[{\bf 1}] \ar[r] & K_0(C(\TT^2))\cong \ZZ[{\bf 1}]\oplus\ZZ([\cL] - [{\bf 1}]) \ar[d]^\partial & \\
  K_1(C(\TT^2)\cong \ZZ\oplus\ZZ \ar[u] & K_1(C(S^1))\cong\ZZ[\zeta] \ar[l] & K_1(C(S^1))\cong\ZZ[\zeta] \ar[l]^0 \nonumber}
\eeq}}
whence we see that the boundary map $\partial$ maps $[\mathcal{L}]-[{\bf 1}]$ with Chern number 1 to $[\zeta]$ with winding number 1 in $K$-theory, whilst killing the trivial bundles $\ZZ[\bf 1]$. 

The (commutative) T-dual of $\mathcal{E}$ can be described more explicitly. Let $\widehat{\TT}^2$ be a ``dual'' torus with coordinates $k_1',k_2'$. The \emph{Poincar\'{e} line bundle} $\mathcal{P}$ over $\TT^2\times\widehat{\TT}^2$ has first Chern class $c_1(\mathcal{P})=\mathrm{d}k_1\wedge\mathrm{d}k_2'+\mathrm{d}k_2\wedge\mathrm{d}k_1'$. The T-dual to (or the Fourier--Mukai transform of) $\mathcal{E}$ is the bundle $\widehat{\mathcal{E}}$ over $\widehat{\TT}^2$ whose Chern character is \cite{H}
\begin{equation}
\mathrm{Ch}(\widehat{\mathcal{E}})=\int_{\TT^2}\mathrm{Ch}(\mathcal{E})\mathrm{Ch}(\mathcal{P})=c(\mathcal{E})+\mathrm{rank}(\mathcal{E})\mathrm{d}k_1'\wedge\mathrm{d}k_2',\nonumber
\end{equation}
showing that the rank and the Chern number are interchanged. This is the commutative version of \eqref{noncommutativeTduality}.

The bulk-boundary homomorphism is again trivialized after applying T-duality:
\beq
\xymatrix{
K^0(\TT^2) \cong \ZZ[{\bf 1}]\oplus\ZZ([\cL] - [{\bf 1}]) \ar[d]^{\iota^*} \ar[r]^{\sim} & K^0(\TT^2) \cong \ZZ([\cL] - [{\bf 1}]) \oplus \ZZ[{\bf 1}]\ar[d]^\partial \\
K^0(\TT) \cong \ZZ[{\bf 1}]\ar[r]^{\quad\sim} & K^{-1}(\TT) \cong \ZZ[\zeta] }.\label{commutativedualitydiagram}
\eeq
This is the simplest example illustrating the principle that $\partial\leftrightarrow i^*$ under T-duality: in the crossed product, the algebra is trivial and the $\ZZ^2$-action is trivial and untwisted.

Physically, the Chern number of $\mathcal{E}$ relates to a quantized transverse conductance through the Thouless--Kohmoto--Nightingale--den Nijs (TKNN) formula, while the winding number under the boundary map counts the number of gapless chiral edge modes.

In general, if we let $\TT^d=\RR^d/\ZZ^d$ be a fundamental domain for the real space translations by $\ZZ^d$, then the T-dual torus is the bulk Brillouin torus of quasi-momenta for $\ZZ^d$. Similarly, the translations $\ZZ^{d-1}$ along a boundary have a fundamental domain $\TT^{d-1}$ whose T-dual torus is the corresponding ``boundary Brillouin torus''. The bulk-boundary homomorphism $\partial$ maps a bulk momentum space invariant to a boundary momentum space invariant. On the other hand, $i^*$ is simply a geometric restriction map between $K$-theory groups of the bulk and boundary fundamental domains. This interpretation of $\partial\leftrightarrow i^*$ under T-duality is consistent with, and generalizes the physical intuition of the bulk-boundary correspondence from a semiclassical picture of the integer quantum Hall effect: the electron's cyclotron orbits are geometrically restricted/intercepted by a boundary, so that closed circular orbits in the bulk become chiral edge channels along the boundary. This discussion is captured schematically by the following diagram:

\beq\label{metadiagram}\nonumber
\xymatrix{
\framebox{\parbox{6.3em}{Real space bulk invariant}}  \ar[dd]_{\parbox{5em}{\footnotesize Restriction to boundary}} \ar[rr]^{\sim\;}_{\rm T-duality} && \framebox{\parbox{8em}{Momentum space bulk invariant}} \ar[dd]^{\parbox{7.7em}{\footnotesize bulk-boundary homomorphism}} \\ && \\
\framebox{\parbox{8em}{Real space boundary invariant}}  \ar[rr]^{\sim}_{\rm T-duality} && \framebox{\parbox{8.4em}{Momentum space boundary invariant}}
}
\eeq

\subsection{Time reversal, Quaternionic bundles, and real crossed products}\label{section:quaternionicbundles}
The standard vector bundle model for 2D band insulators with fermionic time-reversal symmetry $T$ (implemented by an antilinear operator $\mathsf{T}$ with $\mathsf{T}^2=-1$) is based on Quaternionic bundles $\mathcal{E}$ over $\widehat{\TT}^2$ (see \cite{deNittis,Luke} for a detailed discussion of the bundle theory). Here $\widehat{\TT}^2$ has the orientation-preserving involution $\varsigma: k\mapsto -k$ with four fixed points. Quaternionic bundles are complex vector bundles equipped with a Quaternionic structure $\mathsf{T}, \mathsf{T}^2=-1$ mapping the fiber over $k$ antilinearly to that over $-k$. At the fibers over the fixed points of the involution on $\widehat{\TT}^2$, $\mathsf{T}$ reduces to a genuine quaternionic structure, and so $\mathcal{E}$ has even rank as a complex vector bundle.

It is convenient to take an equivalent and perhaps more fundamental view, which is that the sections of a Quaternionic $\mathcal{E}$ form a f.g.p.\ module for the \emph{real} $C^*$-algebra $(\mathbb{R}\rtimes_\mathrm{id}\ZZ^2)\otimes_\mathbb{R}\mathbb{H}$, which can be understood as the $C^*$-algebra generated by the symmetry group $\ZZ^2\times\{1,T\}$. Here, $T$ denotes time-reversal, and is required to be implemented antilinearly and as an anti-involution. In more detail, we want to augment the complex group $C^*$-algebra of $\ZZ^2$, namely $\mathbb{C}\rtimes_\mathrm{id}\ZZ^2\cong C(\TT^2)$, to include $\mathsf{T}$. So we first rewrite $\mathbb{C}\rtimes_\mathrm{id}\ZZ^2$ as $(\mathbb{R}\rtimes_\mathrm{id}\ZZ^2)\otimes_\mathbb{R}\mathbb{C}$, where 
\beq
\mathbb{R}\rtimes_\mathrm{id}\ZZ^2\coloneqq C(i\TT^2)\equiv\{f:\TT^2\rightarrow\mathbb{C}\,:\,\overline{f(k)}=f(-k)\,\,\forall k\in\TT^2\}.\nonumber
\eeq
Then $\mathsf{T}$ introduces a quaternionic structure, augmenting the $\mathbb{C}$-factor to a $\mathbb{H}$-factor.

Thus, we are interested in the real operator $K$-theory group 
\beq
KO_0(C(i\TT^2)\otimes_\mathbb{R}\mathbb{H})\cong KO_4(C(i\TT^2))\cong KO_4(\mathbb{R})\oplus KO_2(\mathbb{R})=\ZZ\oplus\ZZ_2,\nonumber
\eeq
generated by f.g.p.\ modules for $C(i\TT^2)\otimes_\mathbb{R}\mathbb{H}$, for which the sections of a Quaternionic $\mathcal{E}$ furnish an example. 
We can also compute in topological $K$-theory, e.g.\ using \eqref{binomialformula},
\beq
KO_0(C(i\TT^2)\otimes_\mathbb{R}\mathbb{H})\cong KQ^0(\widehat{\TT}^2)\cong KQ^0(\star)\oplus\widetilde{KQ}^0(\widehat{\TT}^2)\cong \ZZ\oplus\ZZ_2.\nonumber
\eeq
Here $\star$ denotes a fixed point for the involution $\varsigma$ on the torus $\widehat{\TT}^2$, and the reduced $\widetilde{KQ}^0(\widehat{\TT}^2)$ is the kernel of the induced map under the inclusion of $\star$ into $\widehat{\TT}^2$. Thus, the $\ZZ$ counts the Quaternionic rank of $\mathcal{E}$, while $\ZZ_2$ is related to the invariant introduced in the physics literature by Fu--Kane--Mele \cite{KM2,FK2,FKM}, which we will identify as the Chern number of a complex sub-bundle $\mathcal{E}$ modulo 2.

We note that a left $\mathbb{H}$-module is a module for $\mathbb{C}\oplus\mathbb{C}j\cong \mathbb{C}^2$ (after choosing a copy of $\mathbb{R}\oplus i\mathbb{R}\cong \mathbb{C}\subset\mathbb{H}$) along with multiplication by $j$ effecting $(z,w)\mapsto(-\overline{w},\overline{z})$ (a quaternionic structure). Also, the map $V:(z,w)\mapsto(-w,z)$ is complex-linear and commutes with the $\mathbb{H}$-action. A f.g.p.\ $C(i\TT^2)\otimes_\mathbb{R}\mathbb{H}$-module is in particular a f.g.p.\ $C(i\TT^2)\otimes_\mathbb{R}(\mathbb{C}\oplus\mathbb{C}j)\cong C(\widehat{\TT}^2)\oplus C(\widehat{\TT}^2)$-module, and is thus a direct sum of sections of two \emph{complex} bundles. Furthermore, the map $\mathsf{T}=1\otimes j$ provides an invertible anti-linear map between the two complex sub-bundles which preserves the orientation on the base space, whence they have the same rank but opposite Chern numbers. Alternatively, Proposition 4.3 of \cite{deNittis} says that Quaternionic bundles over $\widehat{\TT}^2$ are trivial when viewed simply as complex bundles. Physically, we can think of the two (non-canonically defined) complex sub-bundles as Kramers partners, which are exchanged under $\mathsf{T}$. Note that neither the complex isomorphism classes of the two complex sub-bundles nor the order in which they are presented as direct summands need to be invariants of the total bundle as a Quaternionic bundle.

\begin{example}
The basic trivial Quaternionic bundle ${\bf 1}_Q$ has the free module of sections $C(i\TT^2)\otimes_\mathbb{R}\mathbb{H}$. It is the complex bundle $\widehat{\TT}^2\times\mathbb{C}^2=(\widehat{\TT}^2\times\mathbb{C})\oplus(\widehat{\TT}^2\times\mathbb{C})$ on which the Quaternionic structure $\mathsf{T}$ is $\mathsf{T}(k,(z,w))=(-k,(-\overline{w},\overline{z}))$. The analogous trivial Quaternionic bundle over $\widehat{\TT}^d$ for any $d\in\mathbb{N}$ will also be denoted by ${\bf 1}_Q$.
\end{example}

\begin{example}
There is a basic non-trivial Quaternionic bundle $\widehat{\mathcal{L}}$ over $\widehat{\TT}^2$ which is not isomorphic (in the category of Quaternionic bundles over $\widehat{\TT}^2$) to the basic trivial one. As a complex bundle, it is $\widehat{\mathcal{L}}=\varsigma^*\mathcal{L}\oplus\overline{\mathcal{L}}$, with fiber over $k$ being $\mathcal{L}_{-k}\oplus\overline{\mathcal{L}_k}$, and Quaternionic structure given fiberwise by $\mathsf{T}|_{\widehat{\mathcal{L}}_k}:(z_1,\overline{z_2})\mapsto(-z_2,\overline{z_1})$. That $\widehat{\mathcal{L}}$ is indeed non-trivial is a special case of a general result for low-dimensional spaces with involution shown in Section 4 of \cite{deNittis}. The same construction generates a Quaternionic bundle of complex rank 2 from any complex line bundle $\mathcal{L}$ which may have any Chern number.
\end{example}

Note that we can apply $V$ fiberwise on $\widehat{\mathcal{L}}$ to obtain a Quaternionic bundle $V\widehat{\mathcal{L}}$ (with Quaternionic structure $V\mathsf{T}V^{-1}$). Two copies of $\widehat{\mathcal{L}}$ are thus isomorphic to $\widehat{\mathcal{L}}\oplus V\widehat{\mathcal{L}}=(\varsigma^*\mathcal{L}\oplus\overline{\mathcal{L}})\oplus (\overline{\mathcal{L}}\oplus\varsigma^*\mathcal{L})$ with Quaternionic structure interchanging the two bracketed terms. Since each of the bracketed terms has Chern number 0 and can be written as the sum of two trivial line bundles, we can decompose $\widehat{\mathcal{L}}\oplus V\widehat{\mathcal{L}}$ as two copies of ${\bf 1}_Q$. This suggests that for a general Quaternionic bundle, we can take the mod 2 Chern number of either of the two complex subbundles interchanged by the Quaternionic structure as a natural invariant. In fact, it is shown in \cite{deNittis} that the mod 2 Chern number of $\mathcal{L}$ in $\widehat{\mathcal{L}}$ characterizes the Quaternionic isomorphism class of $\widehat{\mathcal{L}}$, and is a special case of the ``FKMM invariant'' defined more generally for spaces with involution. For $\widehat{\TT}^2$, the FKMM invariant is $\ZZ_2$-valued, and completely characterizes the stable classes of Quaternionic bundles (complex rank 2 is already in the stable regime), thus coinciding with $\widetilde{KQ}^0(\widehat{\TT}^2)\cong\ZZ_2$. This invariant can also be computed through a fixed point formula, generalizing the formula introduced by Fu--Kane--Mele (FKM) \cite{KM2,FK2,FKM} (see also \cite{FM}). We will offer a different perspective on the FKMM/FKM $\ZZ_2$-invariant in 2D, by relating it to the second Stiefel--Whitney class for real bundles using T-duality.

\subsection{Bulk-boundary for time-reversal invariant insulators}
Recall that the sections of the valence bundle for a 2D time-reversal invariant insulators form a f.g.p.\ module for the $C^*$-algebra $C(i\TT^2)\otimes_\mathbb{R}\mathbb{H}$, which can be expressed as a (trivial real) crossed product with $\ZZ^{(2)}$ in the transverse direction to a boundary, as $(C(i\TT)\otimes_\mathbb{R}\mathbb{H})\rtimes_{\mathrm{id}}\ZZ^{(2)}\cong (C(i\TT)\rtimes_{\mathrm{id}}\ZZ^{(2)})\otimes_\mathbb{R}\mathbb{H}$. There are isomorphisms $KO_j(C(i\TT^2)\otimes_\mathbb{R}\mathbb{H})\cong KQ^{-j}(\widehat{\TT}^2)$ and $KO_j(C(i\TT)\otimes_\mathbb{R}\mathbb{H})\cong KQ^{-j}(\widehat{\TT})$. We can form the associated real Toeplitz extension and PV-exact sequence as in \eqref{toeplitzextension} and \eqref{PVexactsequence}. The boundary homomorphism which we are interested in is that from $KO_0(\cdot)$ to $KO_7(\cdot)$, which in terms of the topological $KQ$-theory is

\beq
\xymatrix{ \ldots KQ^{0}(\widehat{\TT}) \ar[r]^0 & KQ^{0}(\widehat{\TT})\cong \ZZ[{\bf 1}_Q] \ar[r] & KQ^{0}(\widehat{\TT}^2)) \cong \ZZ[{\bf 1}_Q]\oplus \ZZ_2[\widehat{\mathcal{L}}]\ar[d]^\partial & \\
 \ldots  KQ^{-7}(\widehat{\TT}^2)  & KQ^{-7}(\widehat{\TT}) \ar[l] & KQ^{-7}(\widehat{\TT}) \cong \ZZ_2 \ar[l]^0 }\nonumber
\eeq
making it clear that $\ZZ_2\xrightarrow{\partial}\ZZ_2$ isomorphically. The calculations for the $KQ$-theory groups of the torus with involution follow from the split short exact sequence
\beq
    0\longrightarrow KQ^{-j}(X,Y)\longrightarrow\widetilde{KQ}^{-j}(X)\longrightarrow\widetilde{KQ}^{-j}(Y)\longrightarrow 0\label{retractsequence}
\eeq
for $Y$ an equivariant retract of $X$, which also holds for real $K$-theory of spaces with trivial involution. Applying \eqref{retractsequence} this to the successive retracts $X\times Y\rightarrow (X\times Y)/X\rightarrow X\wedge Y$, we have
\beq
    \widetilde{KQ}^{-j}(X\times Y)\cong \widetilde{KQ}^{-j}(X\wedge Y)\oplus\widetilde{KQ}^{-j}(X)\oplus\widetilde{KQ}^{-j}(Y),\label{retractformula}
\eeq
which when applied to $X=\widehat{S^1}=Y$ gives\footnote{We use the notation $S^d$ for the $d$-sphere with trivial involution, and $\widehat{S^d}$ for the $d$-fold smash product of $\widehat{S^1}=\widehat{\TT}$ with itself, with the involution inherited from the $\widehat{S^1}$ factors. For example, $\widehat{S^2}$ is the 2-sphere in $\mathbb{R}^{1,2}$ with involution $(x,y,z)\mapsto(x,-y,-z)$.}
\beq
    \widetilde{KQ}^{-j}(\widehat{\TT}^2)\cong \widetilde{KQ}^{-j}(\widehat{S^2})\oplus\widetilde{KQ}^{-j}(\widehat{S^1})\oplus\widetilde{KQ}^{-j}(\widehat{S^1}).\nonumber
\eeq
There are Bott periodicity isomorphisms $KQ^{-j}(\cdot)
\cong KR^{-j\pm 4}(\cdot)$ and $KR^{p,q}(\cdot)\cong KR^{p+1,q+1}(\cdot)$ \cite{Dupont,Luke}, so each $\widetilde{KQ}$-group of $\widehat{S^d}$ is isomorphic to some real $K$-theory group of a point. For example, $\widetilde{KQ}^0(\widehat{S^2})\cong \widetilde{KR}^{-4}(\widehat{S^2})\cong \widetilde{KR}^0(S^2)\cong KO^{-2}(\star)= \ZZ_2$ while $\widetilde{KQ}^0(\widehat{\TT})\cong \widetilde{KR}^{-4}(\widehat{\TT})\cong \widetilde{KR}^0(S^3)\cong KO^{-3}(\star)= 0$. Thus $\widetilde{KQ}^{-0}(\widehat{\TT}^2)\cong \widetilde{KQ}^{-0}(\widehat{S^2})\cong\ZZ_2$. More generally, there are binomial formulae
\beq
    KQ^{-n\pm 4}(\widehat{\TT}^d)\cong KR^{-n}(\widehat{\TT}^d) \cong \bigoplus_{j=0}^d \left[KO^{-n+ j}(\star)\right]^{d\choose j},\label{binomialformula}
\eeq
which comes from iterating $KR^{-n}(\widehat{\TT}^d)\cong KR^{-n}(\widehat{\TT}^{d-1})\oplus KR^{-(n-1)}(\widehat{\TT}^{d-1})$ for example.

T-duality maps the $KQ$-groups (using $KQ^{-j}\leftrightarrow KR^{-j\pm 4}$) to $KO$-groups, taking
\begin{align}
KO^{-2}(\TT^2)\overset{\sim}{\longleftrightarrow} & KQ^0(\widehat{\TT}^2)\cong \ZZ\oplus\ZZ_2\label{Kgroups2DTI} \\
KO^{-2}(\TT)\overset{\sim}{\longleftrightarrow} & KQ^{-7}(\widehat{\TT})\cong\ZZ_2.\label{Kgroups2DTIdual}
\end{align}

It is instructive to identify generators for the various $K$-theory groups in \eqref{Kgroups2DTI} and \eqref{Kgroups2DTIdual}. First,
\begin{align*}
KO^{-2}(\TT^2) & \cong KO^{-2}(\star)\oplus\widetilde{KO}^{-2}(\TT^2)\nonumber\\
&\cong  KO^{-2}(\star)\oplus\widetilde{KO}^{-2}(S^2)\oplus \widetilde{KO}^{-2}(S^1)\oplus\widetilde{KO}^{-2}(S^1) \nonumber \\
& \cong   \widetilde{KO}^0(S^2)\oplus \widetilde{KO}^0(S^4)\nonumber \\
&\cong \ZZ_2([\mathcal{L}_\mathbb{C}]-[{\bf 1}_\mathbb{C}])\oplus \ZZ([\mathcal{L}_\mathbb{H}]-[{\bf 1}_\mathbb{H}]),\\
KO^{-2}(\TT) & \cong KO^{-2}(\star)\oplus\widetilde{KO}^{-2}(\TT) \cong   \widetilde{KO}^0(S^2)\oplus\widetilde{KO}^0(S^3)\nonumber\\
&\cong \ZZ_2([\mathcal{L}_\mathbb{C}]-[{\bf 1}_\mathbb{C}]),
\end{align*}
where $\mathcal{L}_\mathbb{H}$ is the quaternionic line bundle over $S^4$ associated to the Hopf fibration $\mathrm{Sp}(1)\hookrightarrow S^7\rightarrow S^4$, $\mathcal{L}_\mathbb{C}$ is the complex line bundle associated to $\mathrm{U}(1)\hookrightarrow S^3\rightarrow S^2$, and ${\bf 1}_\mathbb{H}$ and ${\bf 1}_\mathbb{C}$ are respectively the trivial quaternionic and complex line bundles, all regarded as real vector bundles. Furthermore, $\mathcal{L}_\mathbb{H}$ has a non-trivial Pontryagin class $p_1(\mathcal{L}_\mathbb{H})$ generating the top cohomology of $S^4$, while $\mathcal{L}_\mathbb{C}$ has a non-trivial Stiefel--Whitney class $w_2(\mathcal{L}_\mathbb{C})$ in $H^2(S^2,\ZZ_2)$ which can also be identified with $c_1(\mathcal{L}_\mathbb{C})$ modulo 2. We adopt the suggestive notation $\ZZ_2[w_2]$ for the $\ZZ_2$ subgroup of $KO^{-2}(\TT^2)$.

On the right-hand-sides of \eqref{Kgroups2DTI} and \eqref{Kgroups2DTIdual}, the $\ZZ$ factor in $KQ^0(\widehat{\TT}^2)\cong KR^{-4}(\widehat{\TT}^2)$ counts the Quaternionic rank (or half the complex rank), while the $\ZZ_2$ factor is the $K$-theoretic version of the Fu--Kane--Mele invariant. We denote the latter factor by $\ZZ_2[{\rm FKM}]$, which is mapped by $\partial$ onto the group $KQ^{-7}(\widehat{\TT})\cong \ZZ_2$ counting the number of edge Kramers pairs modulo 2. We also write $\ZZ[{\bf 1}_Q]$ for the former $\ZZ$ factor, which is annihilated by $\partial$. We may now deduce that the bulk-boundary homomorphism becomes trivialized on the T-dual side (see also \cite{MT2}),
\beq
\xymatrix{
KO^{-2}(\TT^2)\cong\ZZ[\mathcal{L}_\mathbb{H}-{\bf 1}_\mathbb{H}]\oplus\ZZ_2[w_2] \ar[d]^{\iota^*} \ar[r]^\sim &   \ZZ[{\bf 1}_Q]\oplus\ZZ_2[{\rm FKM}]\cong KQ^0(\widehat{\TT}^2)\ar[d]^\partial \\
KO^{-2}(\TT) \cong \ZZ_2\ar[r]^\sim &  \ZZ_2\cong KQ^{-7}(\widehat{\TT}) 
}\label{Tdualitydiagram2DTI}
\eeq
where $\iota^*$ is the surjective homomorphism induced by the split inclusion $\iota:\TT\hookrightarrow\TT^2$.

\subsection{T-duality for 3D topological insulators}
A $\ZZ_2$-valued invariant of a different nature occurs in 3D topological insulators. As a bulk topological invariant, it lives in $KQ^0(\widehat{\TT}^3)$, which can be computed to be $\ZZ\oplus 4\ZZ_2$ using \eqref{binomialformula}. To see what the generators are more explicitly, we can use \eqref{retractsequence} with $KQ^{p,q}$ instead of $\widetilde{KQ}^{-j}$, and \eqref{retractformula} to obtain
\begin{align}
    KQ^0(\widehat{\TT}^3)&\cong KQ^{0,1}(\widehat{\TT}^2)\oplus KQ^0(\widehat{\TT}^2)\nonumber\\
    &\cong \widetilde{KQ}^{0,1}(\widehat{\TT}^2)\oplus KQ^0(\widehat{\TT}^2)\nonumber \\
    &\cong \widetilde{KQ}^{0,1}(\widehat{S^2})\oplus \widetilde{KQ}^{0,1}(\widehat{S^1})\oplus \widetilde{KQ}^{0,1}(\widehat{S^1})\oplus \left(\widetilde{KQ}^0(\widehat{S^2})\oplus \ZZ[{\bf 1}_Q]\right) \nonumber\\
    &\cong \widetilde{KQ}^{0}(\widehat{S^3}) \oplus 3\widetilde{KQ}^0(\widehat{S^2})\oplus\ZZ[{\bf 1}_Q]. \label{3DKgroups}
\end{align}
The three $\widetilde{KQ}^0(\widehat{S^2})\cong\ZZ_2$ factors are the ``weak'' Fu--Kane--Mele invariants corresponding to 2D sub-tori, while the $\widetilde{KQ}^{0}(\widehat{S^3})\cong \ZZ_2$ factor is a new 3D phenomenon. In the physics literature, this new $\ZZ_2$-invariant was given the interpretation as a topological contribution to the orbital magnetoelectric polarizability $\vartheta\in\{0,\pi\}\subset\mathrm{U}(1)$ 
\cite{QHZ,EMV,FM}, and is sometimes referred to as a \emph{strong} topological invariant. A similar calculation was performed in Section 11 of \cite{FM} in $KR$-theory, where a $\varsigma$-equivariant stable splitting of the $\widehat{\TT}^d$ into a wedge of spheres (c.f.\ \eqref{stablesplitting}) was used. Then the strong $\ZZ_2$ invariant may be identified as the image in $K$-theory under the projection $\widehat{\TT}^3\rightarrow \widehat{S^3}$, while the other three $\ZZ_2$ invariants are the $K$-theory images corresponding to three choices of projections $\widehat{\TT}^3\rightarrow \widehat{S^1}\times\widehat{S^1}\rightarrow\widehat{S^2}$.

The T-duality isomorphism is now
\beq
KO^{-1}(\TT^3) \longleftrightarrow KR^{-4}(\widehat{\TT}^3)\cong KQ^0(\widehat{\TT}^3).\nonumber
\eeq
One way to compute the left-hand-side is to use the fact that $\Sigma(X\times Y)$ is homotopy equivalent to $\Sigma X\vee \Sigma Y\vee\Sigma(X\wedge Y)$, where $\Sigma X=S^1\wedge X$ is the reduced suspension\footnote{The notation is standard and should not be confused with our earlier usage of $\Sigma$ for a Cantor set.} (Proposition 4.I.1 of \cite{Hatcher}). Applying this twice gives
\beq
    \Sigma(S^1\times S^1\times S^1)\simeq S^2\vee S^2\vee S^2\vee S^3\vee S^3\vee S^3\vee S^4.\label{stablesplitting}
\eeq
It follows that 
\begin{align}
KO^{-1}(\TT^3)&=KO^{-1}(\star)\oplus\widetilde{KO}^{-1}(\TT^3)\cong \widetilde{KO}^0(S^1)\oplus \widetilde{KO}^0(\Sigma(\TT^3))\nonumber\\
& \cong \widetilde{KO}^0(S^1)\oplus 3\widetilde{KO}^0(S^2)\oplus \widetilde{KO}^0(S^4)\nonumber\\
&\cong \ZZ_2([\mathcal{L}_\mathbb{R}]-[{\bf 1}_\mathbb{R}])\oplus 3 \ZZ_2([\mathcal{L}_\mathbb{C}]-[{\bf 1}_\mathbb{C}])\oplus \ZZ([\mathcal{L}_\mathbb{H}]-[{\bf 1}_\mathbb{H}]),\label{3DTdualKgroups}
\end{align}
where $\mathcal{L}_\mathbb{R}$ is M\"{o}bius bundle generating $\widetilde{KO}^0(S^1)$. Alternatively, $\mathcal{L}_\mathbb{R}$ is the canonical line bundle over $\mathbb{R}\mathbb{P}^1$ associated to $\mathrm{O}(1)\hookrightarrow S^1\rightarrow S^1$, which has a non-trivial first Stiefel--Whitney class $w_1$ reflecting its non-orientability. 

Comparing \eqref{3DKgroups} and \eqref{3DTdualKgroups}, we see that the four FKM $\ZZ_2$-invariants (or the FKMM invariants in the sense of \cite{deNittis}) for the 3D topological insulator correspond under T-duality to the Stiefel--Whitney classes $w_1, w_2$. Furthermore, the direct sum decomposition is respected \cite{MT2}, so the strong FKM $\ZZ_2$ invariant corresponds to $w_1$, while the weak FKM $\ZZ_2$ invariants correspond to $w_2$.

We can carry out an analysis for the PV-boundary map under the crossed product of one copy of $\ZZ$ in $\ZZ^3$, similar to that done for the 2D case, in order to study the corresponding surface phenomena. The result \cite{MT2} is a commutative diagram
\beq
\xymatrix{
KO^{-1}(\TT^3)\ar[d]^{\iota^*} \ar[r]^\sim &    KQ^0(\widehat{\TT}^3)\ar[d]^\partial \\
KO^{-1}(\TT^2)\ar[r]^\sim &  KQ^{-7}(\widehat{\TT}^2) \nonumber
}.
\eeq

More generally, our notion of bulk-boundary homomorphism makes sense in any $d\geq 1$. There are also T-dualities for twisted (real) group algebras \cite{Rosenberg}, and twisted crossed products can also be considered when disorder needs to be accounted for. Some of these generalizations form the subject of separate works \cite{MT2,HMT}.

\appendix

\section{The twisted foliated index theorem}\label{appendix:indextheorem}

Here we state a special case of the twisted index theorem that we need in this paper, which has been proved in 
\cite{BenameurMathai2015}.

Let $\rho: \Z^2 \longrightarrow {\rm Homeo}(\Sigma)$ denote the minimal action of $\Z^2�$ on $\Sigma$. 
We suppose that $\mu$ is an invariant measure on $\Sigma$ and that $p$ is even. 
Then the suspension $X= \RR^2 \times_{\Z^2}\Sigma$ is a compact foliated space with transversal the Cantor set $\Sigma$,
and with invariant transverse measure induced from $\mu$.

Set $B = \theta dx\wedge dy$ which is a closed 2-form on $\RR^2 \times \Sigma$ (which is supported on $\RR^2$) satisfying $\gamma^*B=B$ for all $\gamma \in \ZZ^2$.
Since $B=d\eta$ where for instance $\eta = \theta x dy$, we get $0=d(\gamma^*\eta - \eta)$. Since $\RR^2$ is simply-connected,
we see that $\gamma^*\eta - \eta= d\phi_\gamma$, where $\phi_\gamma$ is a smooth function on $\RR^2 \times \Sigma$ (which is supported on $\RR^2$). We normalise it 
so that $\phi_\gamma(0)=0$ for all $\gamma\in\ZZ^2$.

Consider functions $f$ in $L^2(\RR^2 \times \Sigma; dxd\mu)$ and bounded operators on it defined as follows,
\begin{enumerate}
\item $S_\gamma f(x,\vartheta) = e^{i\varphi_\gamma(x)} f(x, \vartheta)$;
\item $U_\gamma f(x,\vartheta) = f(x.\gamma, \vartheta.\gamma)$.
\end{enumerate}
Then for all $\gamma \in \ZZ^2$, the bounded operators $T_\gamma=U_\gamma \circ S_\gamma$ satisfy the relation
\beq
T_{\gamma_1} T_{\gamma_1} = \sigma(\gamma_1,\gamma_2)\, T_{\gamma_1\gamma_2}
\eeq
where $ \sigma(\gamma_1,\gamma_2)= \phi_{\gamma_1}(\gamma_2)$ is a multiplier on $\ZZ^2$.

Let $\dirac$ denote the Dirac operator on $\RR^2$ and $\nabla = d+i\eta$ the connection on the trivial line
bundle on $\RR^2$, $\nabla^{\cE}$ the lift to $\RR^2 \times \Sigma$ of a connection on a vector bundle $\cE\to X$ with curvature $F_\cE$. 
Consider the twisted Dirac operator along the leaves of the lifted foliation,
\beq
D= \dirac\otimes\nabla\otimes\nabla^{\cE} : L^2(\RR^2 \times \Sigma, \cS^+\otimes \cE)\longrightarrow L^2(\RR^2 \times \Sigma, \cS^-\otimes \cE).
\eeq
Then one computes that $T_\gamma \circ D= D\circ T_\gamma$ for all $\gamma\in\ZZ^2$.

The {\em twisted foliation analytic index} is a map, generalizing \cite{Marcolli,Mathai99},
$$
{\rm Index}_{C(X)\rtimes_\sigma \RR^2}: K^0(X) \longrightarrow K_0(C(X)\rtimes_\sigma \RR^2).
$$
Since $C(X)\rtimes_\sigma \RR^2$ is strongly Morita equivalent to $C(\Sigma)\rtimes_\sigma \ZZ^2$, the index map is,
$$
{\rm Index}_{C(\Sigma)\rtimes_\sigma \ZZ^2}: K^0(X) \longrightarrow K_0(C(\Sigma)\rtimes_\sigma \ZZ^2).
$$

Let $\tau$ denote the von Neumann trace on $A^\infty_\theta$, which together with $\mu$ induces a trace $\tau_\mu$
on the crossed product $C(\Sigma)\rtimes_\sigma \ZZ^2$. Let $\tau_K$ denote the Kubo conductance cyclic 2-cocycle on $A^\infty_\theta$, which together with $\mu$ induces a
cyclic 2-cocycle on 
$\tau_{K,\mu}$
on a smooth subalgebra $C(\Sigma)\rtimes_\sigma^\infty \ZZ^2$ of $C(\Sigma)\rtimes_\sigma \ZZ^2$ given by a Frechet completion of the 
algebraic crossed product. Then by \cite{BenameurMathai2015},
\beq
\langle [\tau_{K,\mu}], {\rm Index}_{C(\Sigma)\rtimes_\sigma \ZZ^2}(D)\rangle= \frac{1}{(2\pi)^2}  \int_X e^{\theta dx\wedge dy} \wedge {\rm ch}(F_\cE) \wedge dx\wedge dy\, d\mu(\vartheta).
\eeq
where $[\tau_\mu]$ denotes the cyclic cohomology class in $HC^2(C(\Sigma)\rtimes_\sigma^\infty \ZZ^2)$, and $C(\Sigma)\rtimes_\sigma^\infty \ZZ^2$ is a dense subalgebra of $C(\Sigma)\rtimes_\sigma \ZZ^2$ in the domain of the derivations defining $\tau_{K,\mu}$. When the action on $\Sigma$ is minimal, the rank of $\cE$ is constant, and the above pairing reduces to 
\beq
\langle [\tau_{K,\mu}], {\rm Index}_{C(\Sigma)\rtimes_\sigma \ZZ^2}(D)\rangle= \frac{1}{(2\pi)^2}  \int_X {\rm rank}(\cE) \wedge dx\wedge dy\, d\mu(\vartheta)={\rm rank}(\cE).
\eeq

{\em Acknowledgements}  GCT wishes to thank C.\ Bourne, P.\ Bouwknegt, A.\ Carey, Y.\ Kubota, G.\ de Nittis, M.\ Porta, and A.\ Rennie for helpful discussions, as well as the organizers of the Mini-workshop on Topological States and Non-commutative Geometry at WPI-AIMR Tohoku University. The authors also thank D.\ Baraglia for a helpful suggestion. This work was supported by the Australian Research Council via ARC Discovery Project grants DP110100072, DP150100008 and DP130103924.


\end{document}